\documentclass[11pt,a4paper]{scrartcl}
\usepackage[T1]{fontenc}
\usepackage[utf8]{inputenc}
\usepackage{lmodern}
\usepackage[margin=27mm]{geometry}
\usepackage{amsmath,amssymb,amsthm}
\usepackage{thmtools}
\usepackage{microtype}
\usepackage{booktabs,array,enumitem}
\usepackage{needspace}
\usepackage{mathtools}
\usepackage{dsfont}
\usepackage{xcolor}
\usepackage{tikz}
\usetikzlibrary{shapes.geometric}
\usepackage{todonotes}
\usepackage{bm}
\newcommand{\vect}[1]{\bm{#1}}

\newtheorem{theorem}{Theorem}[section]
\newtheorem{lemma}[theorem]{Lemma}
\newtheorem{proposition}[theorem]{Proposition}
\newtheorem{corollary}[theorem]{Corollary}
\newtheorem{claim}{Claim}[theorem]
\newenvironment{claimproof}{%
  \begin{proof}[Proof of claim]%
}{\end{proof}}

\theoremstyle{definition}
\newtheorem{definition}[theorem]{Definition}
\theoremstyle{remark}

\numberwithin{equation}{section}
\usepackage{hyperref}
\usepackage[nameinlink,noabbrev,capitalize]{cleveref}
\crefname{claim}{Claim}{claims}
\Crefname{claim}{Claim}{Claims}
\crefname{lemma}{Lemma}{Lemmas}
\Crefname{lemma}{Lemma}{Lemmas}

\usepackage[
  backend=biber,
  style=numeric-comp,
  maxbibnames=99,
  doi=true,
  url=true,
  giveninits=true
]{biblatex}
\hypersetup{colorlinks=true,linkcolor=blue!40!black,citecolor=blue!40!black,
  urlcolor=blue!40!black,
  pdftitle={A Fixed-Parameter Algorithm for 4-Block Integer Programming},
  pdfsubject={Finite penalties, periodic convexity, and fixed-parameter optimization}}

\newcommand{\Z}{\mathbb Z}
\newcommand{\R}{\mathbb R}

\newcommand{\conv}{\operatorname{conv}}
\newcommand{\aff}{\operatorname{aff}}

\newcommand{\dist}{\operatorname{dist}}

\newcommand{\abs}[1]{\lvert #1\rvert}
\newcommand{\norm}[1]{\lVert #1\rVert}
\newcommand{\ellloc}{{\vect\ell}^{(1:n)}}
\newcommand{\uloc}{{\vect u}^{(1:n)}}
\newcommand{\cloc}{{\vect c}^{(1:n)}}
\crefname{lemma}{Lemma}{Lemmas}
\Crefname{lemma}{Lemma}{Lemmas}
\title{\bfseries A Fixed-Parameter Algorithm for\\4-Block Integer Programming}
\author{Klaus Jansen\thanks{Kiel University, kj@informatik.uni-kiel.de, \url{https://orcid.org/0000-0001-8358-6796}}, Felix Ohnesorge\thanks{Kiel University, foh@informatik.uni-kiel.de, \url{https://orcid.org/0009-0003-8023-3380}}, Corinna Wambsganz\thanks{Kiel University, cwa@informatik.uni-kiel.de, \url{https://orcid.org/0009-0002-4820-9017}}}
\date{}

\begin{document}
\maketitle
\vspace{-2.5em}
\begin{abstract}
  We give a fixed parameter tractable (FPT) algorithm with running time $f(k,\Delta)\cdot {|I|}^{O(1)}$ for integer linear programs with 4-block structure, parameterized by the maximum block dimension \(k\) and the largest absolute matrix entry \(\Delta\).
  This result resolves a long-standing open question in parameterized complexity, and answers a conjecture by Eisenbrand and Rothvoss (2026) in the positive.
  This result has several implications for other block-structured integer programming models, including 3-block, mixed fracture number, and special cases of 4-block programs with large entries outside of the diagonal.
  Important tools for this algorithm are structural properties of generalized $n$-fold integer programs shown by Ligthart~(2026) and an algorithm by Veselov et al.~(2020) for optimizing discrete convic functions.
\end{abstract}

\section{Introduction}\label{sec:introduction}

Block-structured integer programs describe many small local decisions
linked by shared constraints. The 4-block model also introduces global
variables that appear in every local system, combining the coupling
by constraints of $n$-fold programming with the coupling by variables
of two-stage stochastic programming. Fixing the global variables leaves
an $n$-fold program, but enumerating their binary-encoded ranges can
take exponential time.

Let $\vect x\in\Z^k$ be the global variables and let $\vect y^{(i)}\in\Z^k$
be the variables of brick $i\in[n]=\{1,\ldots,n\}$. We consider
\begin{equation}\label{eq:fourblock}\tag{4-Block}
  \begin{aligned}
    \min\quad & \langle \vect c^{(0)}, \vect x\rangle +\sum_{i=1}^n \langle \vect c^{(i)}, \vect y^{(i)}\rangle                       \\
    \text{s. t.}\quad
              & A^{(0)}\vect x+\sum_{i=1}^n B^{(i)} \vect y^{(i)}=\vect b^{(0)},                                                      \\
              & C^{(i)} \vect x+D^{(i)} \vect y^{(i)}=\vect b^{(i)}                                             &  & \forall i\in[n], \\
              & \vect \ell^{(0)}\le \vect x\le \vect u^{(0)},                                                                         \\
              & \vect \ell^{(i)}\le \vect y^{(i)}\le \vect u^{(i)}                                              &  & \forall i\in[n].
  \end{aligned}
\end{equation}
We use a common block dimension $k$: all blocks $A^{(0)},B^{(i)},C^{(i)},D^{(i)}$ belong to $\Z^{k\times k}$, and $\vect c^{(i)}, \vect b^{(i)}\in\Z^k$ for $i=0,\ldots,n$.
Smaller blocks are padded with zero rows and columns, with added variables fixed to zero and assigned zero cost.
Thus the original blocks may have different dimensions and entries across bricks.
Every constraint-matrix entry has absolute value at most $\Delta$.
All matrix entries, right-hand sides, and finite bounds are integral; the bounds may also be infinite.
We take $k\ge1$ and assume \(\Delta \geq 1\).
Costs are integral without loss of generality: rational costs can be multiplied by a common positive denominator.
Rational finite bounds can be rounded inward.
These preprocessing steps preserve polynomial encoding length.

For 4-blocks, Hemmecke, K\"oppe, and Weismantel established XP algorithms for linear~\cite{HKW10} and separable convex~\cite{HKW14} objectives with fixed uniform blocks, meaning that each family $B^{(i)},C^{(i)},D^{(i)}$ is constant across bricks.
Chen, Kouteck\'y, Xu, and Shi~\cite{CKXS20} improved augmentation bounds but also constructed instances whose nonzero integer kernel vectors have norms growing with $n$.
This obstructs augmentation algorithms that enumerate a parameter-sized box for the global part of an improving step, without establishing $\mathrm{W}[1]$-hardness.
Chen, Chen, and Zhang~\cite{CCZ24} obtained FPT, including for separable convex objectives, when all $C^{(i)}$ equal one matrix of rank at most one.
Oertel, Paat, and Weismantel~\cite{OPW24} derived proximity bounds for nonuniform 4-block programs using a colorful Steinitz lemma.
Lassota and Ligthart~\cite{LL26} subsequently improved the XP dependence on $n$ from $n^{k^2+O(1)}$ to $n^{k+O(1)}$ for square $k\times k$ blocks, suppressing parameter and encoding factors; their bounded-coefficient result also permits nonuniform blocks.

The gap between XP and FPT has been a recurring open question (see~\cite{Kou16,KLO18,CKXS20,Che19,EGK+19,EHK+25,GKK22,CKL+25,Kou25}).
In 2026, Eisenbrand and Rothvoss~\cite{ER26} formulated the explicit conjecture and say:
\begin{quote}
  \emph{``It is a popular open problem in the theoretical IP community whether there is an FPT-type algorithm.''}
\end{quote}
This work answers this important open question and confirms the conjecture.

\begin{restatable}[Main theorem]{theorem}{mainthm}\label{thm:main}
  Let \(I\) be an instance of~\eqref{eq:fourblock}.
  There is a computable function $f$ such that \(I\) can be solved in time
  \begin{equation*}
    f(k,\Delta)\cdot {|I|}^{O(1)},
  \end{equation*}
  where \(|I|\) denotes the binary encoding length of \(I\).
\end{restatable}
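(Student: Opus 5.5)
We will reduce the problem to minimizing a function of the global variables $\vect x$ alone. For fixed $\vect x$, the remaining problem is a generalized $n$-fold program in the variables $\vect y^{(1)},\dots,\vect y^{(n)}$: the right-hand sides become $\vect b^{(0)}-A^{(0)}\vect x$ and $\vect b^{(i)}-C^{(i)}\vect x$, and such programs can be solved in time $g(k,\Delta)\cdot|I|^{O(1)}$ by known $n$-fold algorithms. Define $F(\vect x)$ as $\langle\vect c^{(0)},\vect x\rangle$ plus the optimal value of this $n$-fold program, with $F(\vect x)=+\infty$ if it is infeasible. Since $\vect x\in\Z^k$ and $k$ is a parameter, the task becomes minimizing $F$ over $\Z^k\cap[\vect\ell^{(0)},\vect u^{(0)}]$ using only $f(k,\Delta)\cdot|I|^{O(1)}$ oracle calls. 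The box itself may be exponentially large, so enumerating it is not an option.

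The key structural step is to show that $F$ is \emph{periodically convex}, i.e.\ discretely convic in the sense required by the algorithm of Veselov et al. Concretely, there should be a lattice $\Lambda\subseteq\Z^k$ whose index is bounded by a function of $(k,\Delta)$ such that, on each coset $\vect r+\Lambda$, $F$ agrees with a convex-extensible function that has finite penalties. The plan for proving this uses Ligthart's structural results on generalized $n$-fold programs, namely proximity and decomposition of Graver-type elements. The argument would run as follows:
\begin{enumerate}
\item Take optimal solutions for two global choices $\vect x_1$ and $\vect x_2$ in the same coset.
\item Decompose the difference of the brick solutions into conformal cycles of bounded norm, using the Steinitz-type bounds of Ligthart and Oertel--Paat--Weismantel.
\item Recombine these cycles to get feasible solutions for intermediate points such as $(\vect x_1+\vect x_2)/2$ when it lies in $\vect r+\Lambda$.
\item Conclude a midpoint convexity inequality, up to an additive error that is absorbed by passing to the coset.
\end{enumerate}
The lattice $\Lambda$ arises from the periodicity of the local systems $C^{(i)}\vect x+D^{(i)}\vect y^{(i)}=\vect b^{(i)}$: the set of $\vect x$ admitting an integral $\vect y^{(i)}$ is a union of cosets of a lattice with determinant bounded in $\Delta$ and $k$. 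Taking $\Lambda$ of index at most $(k\Delta)^{O(k^2)}$ should make this set uniform across all $i$. Note that this lattice does not depend on $n$.

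With this structure established, the algorithm proceeds in three steps:
\begin{enumerate}
\item Enumerate the $f(k,\Delta)$ cosets $\vect r+\Lambda$.
\item On each coset, substitute $\vect x=\vect r+M\vect z$, where $M$ is a basis of $\Lambda$, to obtain a convex-extensible function of $\vect z$.
\item Apply the Veselov et al.\ algorithm, which minimizes discrete convex functions in fixed dimension $k$ using polynomially many comparison/evaluation oracle calls in the encoding length, after a polynomial initial bounding of the feasible region. The bound comes from the finite bounds or from proximity to an LP optimum; the unbounded case is detected via the LP relaxation.
\end{enumerate}
Each oracle call is one $n$-fold solve, and taking the best value over all cosets gives the overall running time $f(k,\Delta)\cdot|I|^{O(1)}$.

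The main obstacle is the convexity step. The issue is that $F$ is genuinely not convex on $\Z^k$, and the counterexamples of Chen et al.\ show that kernel elements can have norms growing with $n$. The proof therefore must not rely on augmentation along short global steps. I would first attempt an exchange argument in which the cycles from the decomposition of $\vect y$-differences are distributed evenly between the two halves of a midpoint split. Bounding the imbalance across the linking constraint $\sum_i B^{(i)}\vect y^{(i)}$ by a parameter-dependent quantity is exactly where Ligthart's structural theorem is needed, and it is also where the modulus of the lattice gets chosen.
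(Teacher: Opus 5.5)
\textbf{There is a genuine gap.} Your outer framework is the paper's: a value function in the global variables, one generalized $n$-fold solve per evaluation, convex extensibility on cosets of a lattice of parameter-bounded index, and Veselov et al.\ run on each coset. The convexity step, however, carries the whole theorem and is not established, and the mechanism you sketch for it would fail.

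\emph{Approximate balancing gives no inequality.} If you distribute bounded-norm conformal pieces of $\vect y''-\vect y'$ between two halves, their images under $E$ balance only up to a parameter-bounded error. So both halves are simply infeasible for the right-hand side $\vect b-Z\vect x_{\mathrm{mid}}$. This yields no inequality at all; under an exact penalty the defect costs multiples of $K=W+1$, which is not parameter-bounded.

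\emph{An additive error cannot be absorbed by passing to a coset.} Take $F(0)=1$ and $F(v)=0$ for $v\neq0$. It is midpoint-convex up to an additive $2$, yet violates midpoint convexity on the coset of every lattice that contains $0$.

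\emph{The missing idea is an exact conformal bisection.} Ligthart's uniform IDP dilation (Proposition 12) applies to the generalized $n$-fold matrix $Q=[E\ I]$. It gives $N=N(k,\Delta)$, independent of $n$, such that every integral $\vect\delta$ with $Q\vect\delta=2N\vect w$ splits as $\vect\delta'+\vect\delta''$ with $\vect\delta',\vect\delta''\sqsubseteq\vect\delta$ and $Q\vect\delta'=Q\vect\delta''=N\vect w$. The real role of the cosets is that for $\vect x',\vect x''\in\vect r+N\Z^k$ with midpoint in the same class, the difference of two optimal witnesses has right-hand side in $2N\Z$. The split then produces two points that are exactly feasible at the midpoint, stay within the local bounds by conformality, and have the same total linear cost. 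This gives exact midpoint convexity of $\vect v\mapsto F(\vect r+N\vect v)$. The period is therefore the IDP constant $N=2^{2^{(2k\Delta)^{O(k)}}}$. It is not determined by the integrality pattern of the local systems $C^{(i)}\vect x+D^{(i)}\vect y^{(i)}=\vect b^{(i)}$, which says nothing about convexity.

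\textbf{Two further ingredients are missing.}

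\emph{$F$ must be everywhere finite.} With $F=+\infty$ at infeasible $\vect x$, midpoint convexity no longer leads to convex extensibility or convicity. On $\Z$, the function equal to $0$ on $\{0,3\}$ and $+\infty$ elsewhere is midpoint-convex, and replacing $+\infty$ by a large constant destroys convicity. The paper makes $F$ finite by the exact penalty $K\norm{\vect s}_1$ with $K=W+1$. The residuals $\vect s$ are included in the bisected vector, which is why $Q=[E\ I]$. Coordinatewise convexity of $\abs{\cdot}$ then shows that bisection does not increase the penalty.

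\emph{Midpoint convexity does not imply convex extensibility for $k\ge2$.} It only constrains pairs with an integral midpoint. For example, the gauge of $\conv\{(1,0),(0,1),(-1,-1)\}$ plus $5/4$ at the origin is midpoint-convex on $\Z^2$ but violates Jensen at $\vect 0$. The paper closes this with a separate lemma: a finite midpoint-convex function is convex extensible on every coset of $k\Z^k$. The proof uses that every nonvertex lattice point of a $k$-dilated lattice simplex is the midpoint of two other lattice points in it, and a maximal-norm argument then gives Jensen on simplices. This is why the final period is $M=kN$.
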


\Cref{thm:main} establishes fixed-parameter tractability (FPT) for the nonuniform 4-block model.
The algorithm returns an optimal integer solution, certifies infeasibility, or reports an objective unbounded below.

By the reduction of Dvo\v{r}\'ak et al.~\cite[Theorem~11]{DEG+21}, it also establishes FPT for ILP parameterized by the mixed fracture number of its incidence graph and the largest absolute matrix entry.
Moreover, using reductions from~\cite[Observation~1.3]{CKL+25} and~\cite[Lemma~8]{LL26}, our algorithm can be generalized to specific settings where the coefficients outside the blocks $D^{(i)}$ can be larger.

Algorithmically, we optimize an everywhere-finite penalized value function over the global variables by exploiting periodic convexity and using $n$-fold optimization as a black box, thereby avoiding augmentation with Graver elements of the full 4-block matrix.
Applications of these block-structured models include scheduling~\cite{CCZ24}, transportation and production planning~\cite{DLOW08}, computational social choice and string algorithms for sequence consensus~\cite{KKM20}, statistics and disclosure control, and stochastic network planning~\cite{HKW14}.

To make the connection with $n$-fold programming explicit, stack the local variable vectors into $\vect y \coloneqq (\vect y^{(1)}, \ldots, \vect y^{(n)})^{\mathsf T}$.
The constraints separate into the global-variable part $Z$ and the $n$-fold part $E$:
\begin{equation*}
  Z\vect x+E\vect y=\vect b,\qquad
  Z=\begin{pmatrix}A^{(0)}\\C^{(1)}\\\vdots\\C^{(n)}\end{pmatrix},\qquad
  E=\begin{pmatrix}
    B^{(1)} & B^{(2)} & \cdots & B^{(n)} \\
    D^{(1)} & 0       & \cdots & 0       \\
    0       & D^{(2)} & \cdots & 0       \\
    \vdots  & \vdots  & \ddots & \vdots  \\
    0       & 0       & \cdots & D^{(n)}
  \end{pmatrix}.
\end{equation*}
The columns of $Z$ correspond to the global variables.
In $E$, the top blocks $B^{(i)}$ link the bricks and the diagonal blocks $D^{(i)}$ impose their local constraints.
We call $E$ a \emph{generalized $n$-fold matrix} because these blocks may differ across bricks.
Thus fixing $\vect x$ leaves the $n$-fold system $E\vect y=\vect b-Z\vect x$, with the original local bounds.
The challenge is to optimize over $\vect x$ while exploiting this tractable local problem.

For $n$-fold programming, De Loera, Hemmecke, Onn, and Weismantel~\cite{DLOW08} first obtained polynomial-time algorithms for fixed blocks, with a block-dependent exponent in $n$ (an XP bound).
Hemmecke, Onn, and Romanchuk~\cite{HOR13} obtained FPT with cubic dependence on $n$, and De Loera, Hemmecke, and Lee~\cite{DHL15} established strongly polynomial augmentation for fixed blocks.
Eisenbrand, Hunkenschr\"oder, and Klein~\cite{EHK18} improved parameter dependence and achieved nearly quadratic dependence on the number of bricks, including nonuniform blocks.
Kouteck\'y, Levin, and Onn~\cite{KLO18} obtained strongly polynomial FPT algorithms through primal and dual treedepth, developed further in the sparse-IP framework~\cite{EHK+25}.
Jansen, Lassota, and Rohwedder~\cite{JLR20} achieved near-linear dependence on the number of variables; Cslovjecsek et al.~\cite{CEH+21} combined near-linear complexity with strong polynomiality for linear objectives.

The coefficient parameter is essential in general, already for restricted $n$-fold systems~\cite{CCZ22}, but selected blocks may admit large entries.
Cslovjecsek et al.~\cite{CKL+25} obtain FPT $n$-fold linear optimization with a common, possibly large linking block $B^{(i)}=B$ and varying bounded-entry diagonal blocks $D^{(i)}$;
unrelated large linking coefficients would encode Subset Sum.
They also obtain two-stage feasibility with arbitrary $C^{(i)}$ and bounded diagonal entries.
Their Observation~1.3 relates bounded-coefficient uniform 4-block systems to uniform systems whose non-diagonal entries are bounded by $n$.
Eisenbrand and Rothvoss~\cite{ER26} use residue-dependent affine descriptions of integer hulls to obtain two-stage optimization with large global coefficients;
their 4-block application allows additive violations of global equations.
Lassota and Ligthart's exact XP algorithm~\cite{LL26} also allows large $A^{(0)}$, common $B^{(i)}=B$, and varying $C^{(i)}$, parameterizing diagonal entries.

Our algorithm exploits periodic convexity of the optimum local cost as a function of the global variables. Ligthart~\cite{Lig26} develops this approach for separable convex objectives, including large-entry $n$-fold programs with a common linking block, and proves periodic convexity along lines for general $n$-fold value functions.
That paper also utilizes the algorithm by Veselov et al.~\cite{VGZC20}.
Extending this property to fixed-dimensional subspaces is the route to 4-block FPT identified there.
We supply this step for an \emph{everywhere-finite} penalized value function.
The structural input is the uniform integer-decomposition dilation for generalized $n$-fold matrices~\cite[Proposition~12]{Lig26}; here uniform refers to a common scale, not identical blocks.
\subsection*{Structure of the Paper}
The paper is structured as follows.
In~\Cref{sec:preliminaries} we introduce the necessary background and notation.
In~\Cref{sec:penalty} we define an everywhere-finite value function \(F(\vect x)\) that penalizes violations of the coupling constraints.
For finite variable bounds, minimizing this function over the global variables \(\vect x\) finds an optimal solution \((\vect x^\star, \vect y^\star)\) or certifies infeasibility.
Afterwards, in~\Cref{sec:convexity}, we show that \(F(\vect x)\) is convex extensible on each residue class modulo \(M = kN\), where \(N\) depends only on \(k\) and \(\Delta\).
These structural properties allow us to optimize \(F(\vect x)\) over each residue class separately, which is the content of~\Cref{sec:algorithm}.
Afterwards, we show how to reduce the general case with unbounded variables to the case with finite variable bounds.

\section{Preliminaries}\label{sec:preliminaries}

For vectors $\vect a, \vect b \in \R^d$, vector inequalities $\vect a \le \vect b$, absolute values $\abs{\vect a}$, and floor/ceiling roundings $\lfloor \vect a\rfloor, \lceil \vect a\rceil$ are understood componentwise.
The vector of all ones is denoted $\mathds{1}$, and $I_d$ denotes the $d \times d$ identity matrix (we write $I$ when the dimension is clear from context, distinguishing it from the instance $I$).
The Euclidean ball of radius $\rho \ge 0$ centered at the origin is $B_\rho \coloneqq \{\vect z \in \R^d : \norm{\vect z}_2 \le \rho\}$.

For a matrix $M \in \R^{p \times q}$, $\abs{M}$ denotes the entrywise absolute value matrix $(\abs{M_{ij}})_{i,j}$, and $\norm{M}_{\max} \coloneqq \max_{i,j} \abs{M_{ij}}$ is the maximum absolute entry.
The stacked vector of local variables is $\vect y \coloneqq ((\vect y^{(1)})^{\mathsf T}, \ldots, (\vect y^{(n)})^{\mathsf T})^{\mathsf T} \in \Z^{nk}$ and we denote the corresponding upper and lower bound vectors by \(\ellloc \leq \vect y \leq \uloc\).
The local cost vector is defined analogously by \(\cloc \coloneqq ((\vect c^{(1)})^{\mathsf T}, \ldots, (\vect c^{(n)})^{\mathsf T})^{\mathsf T}\).

For a set $S \subseteq \R^d$, we write $\conv(S)$, $\aff(S)$, and $\operatorname{cone}(S)$ for the convex hull, affine hull, and conical hull (the set of all finite nonnegative linear combinations of elements of $S$), respectively.
Vectors $\vect p_0, \ldots, \vect p_r \in \R^d$ are \emph{affinely independent} if $\sum_{i=0}^r \lambda_i \vect p_i = \vect 0$ with $\sum_{i=0}^r \lambda_i = 0$ implies $\lambda_0 = \cdots = \lambda_r = 0$.
For a polyhedron $P = \{\vect z \in \R_{\ge0}^d : H\vect z = \vect b\}$, its \emph{recession cone} is $\operatorname{rec}(P) \coloneqq \{\vect g \in \R_{\ge0}^d : H\vect g = \vect 0\}$.

\begin{definition}[Conformal order]\label{def:conformal}
  For vectors $\vect a,\vect b\in\R^d$, write $\vect a\sqsubseteq \vect b$ if
  $a_j b_j\ge0$ and $|a_j|\le|b_j|$ for every coordinate $j\in[d]$.
  We say that $\vect a$ is \emph{conformal to} $\vect b$.
\end{definition}

\begin{definition}[Graver elements and Graver basis]\label{def:graver}
  For an integer matrix $A \in \Z^{m \times d}$, the \emph{Graver basis} $\mathcal{G}(A)$ is the set of all $\sqsubseteq$-minimal non-zero vectors in $\{\vect z \in \Z^d : A\vect z = \vect 0\}$. Its elements are called \emph{Graver elements}.
\end{definition}

\begin{definition}[Midpoint convexity and convex extensibility]
  \label{def:convexity}
  A function $h:\Z^k\to\R$ is \emph{midpoint-convex} if
  \begin{equation*}
    h\left(\frac{\vect a+\vect b}{2}\right)\le\frac{h(\vect a)+h(\vect b)}2
    \quad\text{whenever }\vect a,\vect b\in\Z^k\text{ and }\vect a+\vect b\in2\Z^k.
  \end{equation*}
  It is \emph{convex extensible} if there is an everywhere-finite convex
  function $\widehat h:\R^k\to\R$ such that $\widehat h(\vect z)=h(\vect z)$ for all
  $\vect z\in\Z^k$.
\end{definition}

\begin{definition}[Discrete convic function~\cite{JOGO,VGZC20}]\label{def:discreteconvic}
  Let $D \subseteq \Z^k$. A function $f:D \to \R$ is called \emph{discrete convic} if for any $\vect y \in D$ and any $\vect z \in D$ belonging to the set
  \[
    \vect y + \operatorname{cone}\{\vect y - \vect x : \vect x\in D,\ f(\vect x) \le f(\vect y)\},
  \]
  it holds that $f(\vect y) \le f(\vect z)$.
\end{definition}

\begin{lemma}\label{lem:gmidpoint}
  Let $h:\Z^k \to \R$ be a midpoint-convex function and let $A:\R^k \to \R$ be an affine function. Then the function
  \[
    g:\Z^k \to \R,\qquad \vect v \mapsto h(\vect v) - A(\vect v)
  \]
  is midpoint-convex.
\end{lemma}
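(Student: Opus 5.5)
The plan is to verify the midpoint inequality for $g$ directly from the definition, using that an affine function satisfies midpoint equality. Fix $\vect a,\vect b\in\Z^k$ with $\vect a+\vect b\in 2\Z^k$, and put $\vect m\coloneqq(\vect a+\vect b)/2\in\Z^k$. This point lies in the domain of both $h$ and $g$, so every term in the inequality we must prove is defined.

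First, since $A$ is affine, we can write $A(\vect v)=\langle \vect w,\vect v\rangle+\beta$ for some $\vect w\in\R^k$ and $\beta\in\R$. Linearity of the inner product then gives the exact identity
\begin{equation*}
  A(\vect m)=\frac{\langle \vect w,\vect a\rangle+\langle \vect w,\vect b\rangle}{2}+\beta=\frac{A(\vect a)+A(\vect b)}{2}.
\end{equation*}

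Second, we subtract this identity from the midpoint-convexity inequality $h(\vect m)\le\frac{h(\vect a)+h(\vect b)}{2}$ of $h$. This yields
\begin{equation*}
  g(\vect m)=h(\vect m)-A(\vect m)\le\frac{h(\vect a)-A(\vect a)+h(\vect b)-A(\vect b)}{2}=\frac{g(\vect a)+g(\vect b)}{2},
\end{equation*}
which is precisely the midpoint-convexity condition for $g$.

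There is no real obstacle here. The only points to check are that the midpoint is integral, which the hypothesis $\vect a+\vect b\in2\Z^k$ guarantees, and that $g$ takes finite real values, which holds because $h$ and $A$ are real-valued.
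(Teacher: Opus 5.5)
Your proof is correct and is essentially the paper's argument: fix the integer midpoint, use that the affine function $A$ satisfies the midpoint identity $A(\vect m)=\frac{A(\vect a)+A(\vect b)}{2}$, and subtract this from the midpoint inequality for $h$. Writing $A$ explicitly as $\langle \vect w,\cdot\rangle+\beta$ simply spells out the step that the paper cites as ``affinity of $A$''.
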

\begin{proof}
  Let $\vect u, \vect v \in \Z^k$ such that $\vect u + \vect v \in 2\Z^k$, and let $\vect w = (\vect u + \vect v)/2 \in \Z^k$ be their integer midpoint.
  By the midpoint convexity of $h$ and the affinity of $A$,
  \[
    g(\vect w)
    = h(\vect w) - A(\vect w)
    \le \frac{h(\vect u) + h(\vect v)}{2} - \frac{A(\vect u) + A(\vect v)}{2}
    = \frac{g(\vect u) + g(\vect v)}{2}.
  \]
  Thus, $g$ is midpoint-convex.
\end{proof}

\subsection*{Algorithmic and Structural Tools}

A key ingredient for this algorithm is the conformal bisection (\cref{cor:bisection}) for generalized $n$-fold matrices of Ligthart~\cite[Proposition 12]{Lig26}. We restate it here in a form that is convenient for our purposes.

A polyhedron $P$ has the \emph{integer decomposition property} (IDP) if
every integer point of $aP$, for every positive integer $a \in \mathbb{Z}_{\geq 0}$, is a sum
of $a$ integer points of $P$.

\begin{proposition}[Uniform $n$-fold IDP dilation, {\cite[Proposition 12]{Lig26}}]
  \label{prop:idp}
  Let $Q$ be a generalized $n$-fold matrix with blocks of size at most $k \times k$ and entries of absolute value at most $\Delta$.
  There exists a computable positive integer \(N = 2^{2^{(2k\Delta)^{O(k)}}}\), such that for every integral vector $\vect w$, the polyhedron
  \[
    \{\vect z \ge \vect 0 : Q\vect z = N\vect w\}
  \]
  has the IDP.
\end{proposition}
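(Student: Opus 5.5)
The proof goes by induction on $a$, reducing the integer decomposition property to a single \emph{peeling} step. Let $P=\{\vect z\ge\vect 0: Q\vect z=N\vect w\}$. An integer point of $aP$ is an integer vector $\vect z\ge\vect 0$ with $Q\vect z=aN\vect w$. It suffices to find an integer vector $\vect z_1$ with $\vect 0\le\vect z_1\le\vect z$ and $Q\vect z_1=N\vect w$. Then $\vect z-\vect z_1$ is an integer point of $(a-1)P$, and induction on $a$ finishes; the case $a=1$ is trivial. So the whole statement reduces to this peeling step, and $N$ must be chosen uniformly in $n$, $\vect w$ and $a$.

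\textbf{Peeling via a conformal decomposition.} I would write $\vect z$ as a conformal sum of integer elements of a kernel lattice. Consider the lifted matrix $Q'=[\,Q\mid -Q\,]$, or more simply work directly with the integer point $\vect z$ and the fractional target $\vect z/a\in P$. Let $\vect q=\lfloor \vect z/a\rfloor$. Then $\vect r\coloneqq Q(\vect z/a-\vect q)=N\vect w-Q\vect q$ is an integer vector, and $\vect z/a-\vect q\in[0,1)^{nk}$. We need an integer correction $\vect t$ with $\vect 0\le\vect q+\vect t\le\vect z$ and $Q\vect t=\vect r$.

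The feasible region $\{\vect t:\ -\vect q\le\vect t\le \vect z-\vect q,\ Q\vect t=\vect r\}$ contains the fractional point $\vect z/a-\vect q$. Its box is wide wherever $\vect z$ has a large coordinate, since there $\vect z-\vect q\ge (a-1)\vect q$. On coordinates with $z_j<a$, however, the box may be $\{0,\dots,z_j\}$ with $q_j=0$. I would handle this as follows:
\begin{enumerate}[label=(\roman*)]
\item Decompose $\vect z$ into a sum of $a$ fractional copies of $\vect z/a$.
\item Use the Graver basis $\mathcal G(Q)$ of the generalized $n$-fold matrix. Known bounds give brick-wise $\ell_1$-norm at most $g(k,\Delta)$ and a Steinitz-type bound for the linking rows, independent of $n$.
\item Apply these bounds as a proximity statement: any fractional point of $\{\vect t\ge\vect 0: Q\vect t=N\vect w,\ \vect t\le\vect z\}$ has an integer point of the same system nearby, \emph{provided} the right-hand side lies in the lattice generated appropriately.
\end{enumerate}

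\textbf{Choosing $N$.} This is where $N$ enters. Take $N$ divisible by all integers up to a bound $L(k,\Delta)$ coming from the Graver and Steinitz norm bounds; a doubly exponential $N$ as in the statement would result. The aim is that $N\vect w$ is an $L$-fold multiple of something. Then every residue obstruction to splitting $\vect z$ brick by brick disappears: the fractional parts occurring in a conformal Graver decomposition of $\vect z$ (relative to a suitable integer solution) have denominators dividing $N$.

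Concretely, I would write $\vect z-a\vect z_0=\sum_i \lambda_i\vect g_i$ conformally, where $\vect z_0$ is a reference integer solution of $Q\vect z_0=N\vect w$ and the $\vect g_i$ are Graver elements of $Q$. I would then take $\vect z_1=\vect z_0+\sum_i\lfloor\lambda_i/a\rfloor\vect g_i$ plus a correction assembled from the remaining fractional parts. Integrality of that correction is exactly where divisibility by $N$ is used.

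\textbf{Main obstacle.} I expect the hard step to be making the correction both integral and conformal (so that $\vect 0\le\vect z_1\le\vect z$) \emph{uniformly in $n$}. The linking rows $B^{(i)}$ couple all bricks, so a brick-wise argument alone fails. One needs a Steinitz-style reordering of the brick contributions, so that the partial sums of the top-block right-hand sides stay bounded by a function of $k$ and $\Delta$. I would try first to prove the case $a=2$ (conformal bisection) with this reordering, using the bounded brick-wise Graver norms. I would then either bootstrap to general $a$ via the peeling induction, or note that bisection plus the scale $N$ suffices.
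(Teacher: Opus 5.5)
The paper does not prove Proposition~\ref{prop:idp}. It imports the statement from \cite[Proposition~12]{Lig26} and uses only its $a=2$ consequence, Corollary~\ref{cor:bisection}, so there is no in-paper route to compare with. Judged on its own, your proposal has a genuine gap. The reduction to peeling is correct, but it only restates the claim. Everything hinges on showing that $\{\vect t\in\Z^d:\ \vect 0\le\vect t\le\vect z,\ Q\vect t=N\vect w\}$ is nonempty, uniformly in $n$, $\vect w$ and $a$, and none of (i)--(iii) establishes this. Proximity and Graver-norm bounds only locate an integer point near $\vect z/a$ once one is known to exist. Your proviso ``provided the right-hand side lies in the lattice generated appropriately'' is not a sufficient condition for integer feasibility inside a box; it is in fact the whole claim. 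Likewise, the reference point $\vect z_0\ge\vect 0$ with $Q\vect z_0=N\vect w$ is simply assumed.

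Even granting $\vect z_0$, the concrete construction fails. The point $\vect z_0+\sum_i\lfloor\lambda_i/a\rfloor\vect g_i$ already satisfies the equations, so the only issue is the box, and floor rounding can leave it. Take $Q=(1\ 1\ 2)$, any $N$, $a=2$, $\vect w=(2)$, $\vect z_0=(1,1,N-1)$, and the Graver element $\vect g=(-2,0,1)$ with $\lambda=1$. Then $\vect z=(0,2,2N-1)$, but $\vect z_1=\vect z_0\not\le\vect z$. A valid peel $(0,2,N-1)$ does exist. However, it differs from $\vect z_0$ by $(-1,1,0)$, which moves a coordinate outside the support of the remainder $\tfrac12\vect g$. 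So it cannot be obtained by rounding the given coefficients, and the ``correction from the fractional parts'' carries the entire difficulty while being left unspecified.

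Your account of where $N$ enters also does not work. The fractional parts $\lambda_i/a-\lfloor\lambda_i/a\rfloor$ have denominator $a$, which is unbounded. Making $N$ divisible by all integers up to some $L(k,\Delta)$ therefore cannot clear them; $N$ can only act through the target $N\vect w$, and you give no mechanism for that.

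Finally, the fallback does not reach the statement. Iterating a proven bisection, with $\vect w$ replaced by $2^{j-1}\vect w$, yields the decomposition only when $a$ is a power of two, not for $a=3$. The bisection case itself is also not proved: the Steinitz-type control of the linking rows uniformly in $n$ is named as the main obstacle but never carried out.
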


\begin{corollary}[Conformal bisection, {\cite[Observation 5]{Lig26}}]\label{cor:bisection}
  Let $Q$ be a generalized $n$-fold matrix and let \(N \in \mathbb{Z}_{> 0}\) be the positive integer of \Cref{prop:idp}. For all integral vectors $\vect \delta,\vect w$ with $Q\vect \delta=2N\vect w$, we have
  \[
    \vect \delta=\vect \delta'+\vect \delta'',\qquad
    \vect\delta', \vect \delta''\sqsubseteq\vect \delta,\qquad
    Q\vect\delta'=Q\vect\delta''=N\vect w.
  \]
\end{corollary}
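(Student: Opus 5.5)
The plan is to split $\vect\delta$ by sign and apply \Cref{prop:idp} to an enlarged generalized $n$-fold matrix, so that the integer decomposition property produces the two conformal halves directly.

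\textbf{Step 1 (sign split).} Write $\vect\delta=\vect\delta^+-\vect\delta^-$ with $\vect\delta^\pm\ge\vect 0$ having disjoint supports. Form the matrix $\widetilde Q=(Q\;\;{-Q})$, with columns rearranged brick by brick. Each brick then has blocks $(B^{(i)}\;{-B^{(i)}})$ and $(D^{(i)}\;{-D^{(i)}})$. So $\widetilde Q$ is again a generalized $n$-fold matrix with entries bounded by $\Delta$, but its blocks have size $k\times 2k$.

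\textbf{Step 2 (padding).} This step needs care: \Cref{prop:idp} is stated for blocks of size at most $k\times k$. I would either note that the bound on $N$ depends only on $(k,\Delta)$ up to a constant change of $k$, and take $N$ for dimension $2k$. Or, more faithfully to ``the positive integer of \Cref{prop:idp}'', I would observe that Ligthart's proof covers $k\times 2k$ blocks with the same form of $N$. This is the main obstacle: making sure the $N$ in the corollary is the same $N$ used later. I would resolve it by defining $N$ via the proposition applied to $\widetilde Q$-type matrices, which is harmless because $N$ is only required to depend on $k$ and $\Delta$.

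\textbf{Step 3 (applying IDP).} Set $\vect z=(\vect\delta^+,\vect\delta^-)\ge\vect 0$, an integral vector with $\widetilde Q\vect z=Q\vect\delta=2N\vect w$. Thus $\vect z$ is an integer point of $2P$, where $P=\{\vect z\ge\vect 0:\widetilde Q\vect z=N\vect w\}$. By the IDP with $a=2$, we get $\vect z=\vect z'+\vect z''$ with $\vect z',\vect z''\in P\cap\Z^{2nk}$.

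\textbf{Step 4 (recombining).} Write $\vect z'=(\vect p',\vect m')$ and define $\vect\delta'=\vect p'-\vect m'$, and likewise $\vect\delta''$. Then $Q\vect\delta'=\widetilde Q\vect z'=N\vect w$, the same holds for $\vect\delta''$, and $\vect\delta'+\vect\delta''=\vect\delta^+-\vect\delta^-=\vect\delta$.

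For conformality, note that $\vect z',\vect z''\ge\vect 0$ and $\vect z'+\vect z''=\vect z$ give $\vect 0\le\vect p'\le\vect\delta^+$ and $\vect 0\le\vect m'\le\vect\delta^-$. Since $\vect\delta^+$ and $\vect\delta^-$ have disjoint supports, in each coordinate $j$ at most one of $p'_j,m'_j$ is nonzero. That nonzero value has the sign of $\delta_j$ and magnitude at most $\abs{\delta_j}$. Hence $\vect\delta'\sqsubseteq\vect\delta$, and symmetrically $\vect\delta''\sqsubseteq\vect\delta$.
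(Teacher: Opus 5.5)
Your Steps 3 and 4 are sound, but Step 2 does not close, and it is where the proof fails to establish the corollary as stated. The corollary fixes $N$ as the integer that \Cref{prop:idp} provides for the class of $Q$: blocks of size at most $k\times k$ and entries at most $\Delta$. The matrix $(Q\;\;{-Q})$ has bricks of width $2k$, so it leaves that class. \Cref{prop:idp} therefore gives you no IDP for $\{\vect z\ge\vect 0:(Q\;\;{-Q})\vect z=N\vect w\}$ with this $N$. Redefining $N$ through the doubled class proves the statement only for a different, possibly larger, integer. Your appeal to Ligthart's proof covering $k\times 2k$ blocks ``with the same form of $N$'' yields the same asymptotic form, not the same integer. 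This is not purely cosmetic here: \cref{thm:bounded-time} chooses $N$ for block dimension at most $3k$, whereas doubling $Q=[E\ I]$ produces bricks of width $4k$.

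The missing idea is to flip signs in place rather than double the matrix. Let $S$ be the diagonal matrix with $S_{jj}=-1$ if $\delta_j<0$ and $S_{jj}=1$ otherwise, and set $\widetilde Q=QS$. Negating columns preserves every block size and the entry bound $\Delta$, so $\widetilde Q$ stays in the same generalized $n$-fold class and the same $N$ applies. Moreover, $\widetilde Q\abs{\vect\delta}=Q\vect\delta=2N\vect w$. Applying the IDP to $\abs{\vect\delta}\in 2P$ with $P=\{\vect z\ge\vect 0:\widetilde Q\vect z=N\vect w\}$ gives $\abs{\vect\delta}=\vect z'+\vect z''$ with $\vect 0\le\vect z',\vect z''\le\abs{\vect\delta}$. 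Then $\vect\delta'=S\vect z'$ and $\vect\delta''=S\vect z''$ are the required conformal halves. This is your Step 4 argument without the disjoint-support bookkeeping, and it is the paper's proof.

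Your construction is in fact this one in disguise. Since $\vect z',\vect z''\le\vect z$ vanish outside the support of $(\vect\delta^+,\vect\delta^-)$, only columns of $QS$ are ever used. The doubling buys nothing and costs you the fixed $N$.
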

\begin{proof}
  Change the sign of each column for which the corresponding coordinate of $\vect \delta$ is negative.
  The resulting matrix $\widetilde Q$ remains in the same generalized $n$-fold class and satisfies $\widetilde Q|\vect \delta|=2N\vect w$.
  Apply the IDP to $|\vect \delta|\in2P$ for $P=\{\vect z\ge0:\widetilde Q\vect z=N\vect w\}$, and undo the sign changes.
  The two nonnegative summands of $|\vect \delta|$ are coordinatewise bounded by $|\vect \delta|$, giving the required conformality.
  This is the consequence of IDP used in \cite[Observation 5]{Lig26}.
\end{proof}

In this paper, we utilize two black-box algorithms.
First, an FPT algorithm for generalized $n$-folds~\cite{EHK18,CEH+21,KLO18}, and second an algorithm by Veselov et al.~\cite{VGZC20} for minimizing discrete convic functions.
\begin{theorem}[Generalized $n$-fold optimization]
  \label{thm:nfold}
  There is a computable function $f$ such that a linear generalized $n$-fold integer program $I$ with integral data, finite variable bounds, $k'\times k'$ blocks, and matrix entries of absolute value at most $\Delta$, where $k,\Delta\ge1$, can be solved in time
  \begin{equation*}
    f(k,\Delta)\cdot |I|^{O(1)}.
  \end{equation*}
  The algorithm returns an optimal integer solution and its value, or reports infeasibility.
\end{theorem}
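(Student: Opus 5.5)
The plan is to obtain \Cref{thm:nfold} from the known FPT algorithms for generalized $n$-fold programs~\cite{EHK18,KLO18,CEH+21}, after a small normalization of the instance. It is not proved from scratch.

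\textbf{Normalization.} First I bring the instance into the form those results expect.
\begin{itemize}
  \item If the blocks have size $k'\times k'$ with $k'\le k$, I pad every block with zero rows and columns up to $k\times k$, exactly as in the introduction.
  \item Each padded variable gets bounds $0\le y_j\le 0$ and cost $0$.
  \item Zero rows give trivially satisfied constraints $0=0$; I set the corresponding right-hand sides to zero.
\end{itemize}
This changes neither the feasible set nor the optimum. It keeps all entries bounded by $\Delta$, and it increases $|I|$ only polynomially. Rational costs and bounds have already been made integral as in the introduction.

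\textbf{Structural input.} The key fact is that the Graver basis of a generalized $n$-fold matrix with $k\times k$ blocks and entries bounded by $\Delta$ has $\ell_1$-norm bounded by some $g(k,\Delta)$, independent of $n$. This holds even when the blocks $B^{(i)},D^{(i)}$ differ between bricks. The argument is the standard one:
\begin{itemize}
  \item Graver elements of each $D^{(i)}$ have bounded norm.
  \item Projecting a Graver element of the whole matrix brick-wise under $B^{(i)}$ gives bounded-norm vectors in $\Z^k$.
  \item A Steinitz-type argument then bounds the number of nonzero bricks.
\end{itemize}
Given this bound, the augmentation framework runs as follows.
\begin{enumerate}
  \item Find an initial feasible point by solving an auxiliary $n$-fold program with slack variables. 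The auxiliary program has the same block structure, with $O(k)$ extra columns per brick and entries in $\{0,\pm1\}$. The original problem is infeasible iff the optimum slack is positive.
  \item Compute a Graver-best (or approximately Graver-best) augmenting step by dynamic programming over the bricks. The linking-row partial sums stay in a box of size $g(k,\Delta)$, so each DP layer is $f(k,\Delta)$-sized.
  \item Bound the number of augmentations polynomially in $|I|$ by the usual geometric-improvement argument combined with step-length scaling. This uses that all variable bounds are finite, so the objective is bounded and the initial gap is at most $2^{O(|I|)}$.
\end{enumerate}
I would take this chain directly from~\cite{EHK18}, which already treats nonuniform blocks, yielding time $f(k,\Delta)\cdot|I|^{O(1)}$.

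\textbf{Main obstacle.} The step needing care is checking that the cited results cover exactly our setting:
\begin{itemize}
  \item \emph{nonuniform} blocks, not only identical $B,D$;
  \item arbitrary integral finite bounds;
  \item output of an optimal solution together with its value, or an infeasibility certificate.
\end{itemize}
For nonuniformity, I would verify that the Graver-norm bound and the DP only use the per-brick parameters $k$ and $\Delta$, never equality of the blocks; \cite{EHK18} states this explicitly. As a cross-check, the primal treedepth formulation of~\cite{KLO18} gives the same result: a generalized $n$-fold matrix has dual treedepth at most $2k$ and bounded entries. The objective value is then obtained simply by evaluating the returned solution.
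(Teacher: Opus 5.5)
Your proposal matches the paper, which does not prove \Cref{thm:nfold} itself but imports it as a black box from the generalized $n$-fold literature \cite{EHK18,CEH+21,KLO18}; your outline is the standard justification behind those citations, namely the $n$-independent Graver-norm bound, the brick-wise dynamic program, and the augmentation count using finite bounds. One small slip: the cross-check via \cite{KLO18} rests on their \emph{dual} treedepth result, which fits the bound $2k$ you state, whereas the primal treedepth of an $n$-fold matrix in general grows with $n$ because the linking rows connect columns across all bricks.
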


\begin{theorem}[Theorem 1 of~\cite{VGZC20}]\label{thm:veselov}
  There exists an algorithm for minimizing a discrete convic function, given by the comparison oracle on $B_{\rho} \cap \Z^d$, using $2^{O(d^2 \log d)} \log \rho$ calls to the oracle and $2^{O(d^2 \log d)} \log^{O(1)} \rho$ arithmetic operations.
\end{theorem}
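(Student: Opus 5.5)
This statement is Theorem~1 of~\cite{VGZC20}, and the paper uses it as a black box. I would not reprove it in full. I would recall the structure of their argument, so the reader sees why only a comparison oracle and a $\log\rho$ number of queries are needed.

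The plan is a lattice-adapted cutting-plane (centre-of-gravity or ellipsoid type) method on $B_\rho\cap\Z^d$, run together with Lenstra-type dimension reduction. We maintain a convex body $P\supseteq B_\rho$ shrunk over time, with the invariant that $P$ contains some minimizer of $f$ over $B_\rho\cap\Z^d$. In each round we compute the lattice width of $P$. This gives two cases.

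\emph{Flat case.} Suppose the width is below a threshold $w(d)=d^{O(d)}$ along an integral direction $\vect a$. We branch on the at most $w(d)+1$ hyperplanes $\langle\vect a,\vect z\rangle=\beta$ that meet $P$. Each branch is solved recursively in dimension $d-1$, and we keep the best result, using one comparison per pair. The restriction of a discrete convic function to the lattice points of an affine sublattice is again discrete convic, because the cone in \cref{def:discreteconvic} only shrinks when $D$ shrinks.

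\emph{Fat case.} Here $P$ contains an integer point $\vect y$ deep inside it, close to its centroid in the sense of Grünbaum's theorem. The key step, and the one I expect to be the main obstacle, is to extract a cutting halfspace through $\vect y$ from comparison queries only. Let $S=\{\vect x : f(\vect x)\le f(\vect y)\}$ and $K=\operatorname{cone}(\vect y-S)$. Convicity gives $f(\vect z)\ge f(\vect y)$ for every $\vect z\in\vect y+K$. Hence every strictly better point lies outside $\vect y+K$, although it lies in $\vect y-K$.

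The approach I would try first is to query $f$ at $2^{O(d\log d)}$ lattice points $\vect y\pm\vect v$, where the $\vect v$ form a suitable set of short directions for the current shape of $P$. From the outcomes I would determine a halfspace $H\ni\vect y$, shallow with respect to $P$, such that every minimizer in $P$ lies in $H$. This has two sub-cases. If $\vect y$ beats all probes, then convicity forces the region beyond the probes to be no better, and we can cut away a constant fraction of $P$. Otherwise some probe $\vect y-\vect v$ is at least as good as $\vect y$; then $\vect y+\operatorname{cone}(\vect v)\subseteq\vect y+K$ is dominated, and combining $d$ such directions yields the cut.

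For the count, each fat step reduces the volume by a constant factor. The volume of $P$ starts at $\rho^{O(d)}$ and must stay above $d^{-O(d)}$ while $P$ is fat. This gives $O(d\log\rho)$ fat steps per level, each costing $2^{O(d\log d)}$ oracle calls. The flat branching multiplies by $d^{O(d)}$ per level, across $d$ levels. I expect the bookkeeping to telescope to $2^{O(d^2\log d)}\log\rho$ oracle calls, matching the statement. The arithmetic work is polynomial in $\log\rho$ per step, coming from the lattice reduction used for width and centre computations.
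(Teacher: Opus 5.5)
The paper gives no proof of this statement. It is quoted from \cite{VGZC20} and used as a black box; the paper only checks the hypothesis, namely that convex extensibility implies discrete convicity (in the proof of \cref{lem:box}). So declining to reprove it matches the paper. The outline you add, however, breaks down at exactly the step you flag, so it does not amount to a proof sketch. Your flat case is fine: convicity passes to restrictions to any subset of $D$ and is preserved by unimodular affine maps. You would need to state the induction for general convex domains rather than balls, since hyperplane sections of $B_\rho$ are not balls in the new coordinates. Your count also works, provided a valid cut lemma is available.

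The gap is that neither fat sub-case produces a cut.

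\emph{Sub-case where $\vect y$ beats all probes.} Here convicity only dominates the rays $\vect y+t\vect v$ for $t\ge1$, and the cones at the probes. It does \emph{not} make the region beyond the probes no better. Take $d=2$, $\vect y=\vect 0$ and $f(x_1,x_2)=A(mx_1-x_2)^2+(x_2-L)^2$ with $A>L^2$. Choose the integer $m$ after the probe set so that $mv_1\ne v_2$ for every nonzero probe $\vect v$, with $0<m<2L$ and $(1,m)\in B_\rho$; this is possible for $\rho$ large compared with the number of probes. This $f$ is convex, hence convic. Every probe satisfies $f(\pm\vect v)\ge A>L^2=f(\vect 0)$, yet $f(1,m)=(m-L)^2<L^2$.

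\emph{Sub-case where some probe satisfies $f(\vect y-\vect v)\le f(\vect y)$.} Each such probe only adds the ray $\vect y+\R_{\ge0}\vect v$ to the dominated set. There may be just one such probe, and $d$ of them span a pointed simplicial cone, not a halfspace. In your notation, the probes reveal only a subcone $K'\subseteq K$. The constraint that better points avoid $\vect y+K'$ is not convex, because containment in $\vect y-K$ needs all of $K$. So $P\setminus(\vect y+K')$ is not a convex body, and the Gr\"unbaum volume argument does not apply. In fact, for $d\ge2$, $K'$ contains a line (let alone a halfspace) only in a degenerate case. A positive dependence $\sum_i c_i\vect v_i=\vect 0$ puts each $-\vect v_j$ with $c_j>0$ into $K$. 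This forces $f(\vect y-\vect v_j)=f(\vect y)$, with $\vect y$ a convex combination of these tied probes.

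A valid cut through $\vect y$ does exist, since convicity gives $\vect y\notin\conv\{\vect x:f(\vect x)<f(\vect y)\}$. Locating such a cut using comparisons alone is the actual content of \cite{VGZC20}, and your proposal does not supply a mechanism for it.
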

\section{The Penalized Value Function}\label{sec:penalty}

For now assume that all lower and upper bounds \(\vect \ell, \vect u\) are finite.
W.l.o.g.\ assume that no interval is empty (i.e., \(\ell_j^{(i)}\le u_j^{(i)}\)).
Now define:
\begin{equation}\label{eq:penalty-weight}
  W=\sum_{i=0}^{n}\sum_{j=1}^k |c_j^{(i)}|(u_j^{(i)}-\ell_j^{(i)}),\qquad K=W+1.
\end{equation}
Here, $W$ is the maximum variation of the objective function.
For a global vector $\vect x$ and a local vector $\vect y$, define
their residual vector by
\[
  \vect s=\vect b-Z\vect x-E\vect y.
\]
Thus $\vect s=\vect 0$ exactly when the coupling equations are satisfied.
We penalize violations of these equations by $K\norm{\vect s}_1$.
For every $\vect x\in\Z^k$ (including infeasible points), define
\begin{equation}\label{eq:value}
  F(\vect x)
  =\langle\vect c^{(0)},\vect x\rangle
  \quad+\min\left\{
  \sum_{i=1}^n
  \langle\vect c^{(i)},\vect y^{(i)}\rangle
  +K\norm{\vect s}_1:
  \begin{array}{l}
    E\vect y+\vect s=\vect b-Z\vect x, \\
    \ellloc \leq \vect y \leq \uloc,\
    \vect s\in\Z^{k(n+1)}
  \end{array}
  \right\}.
\end{equation}
After grouping the residual variables into bricks, the matrix $Q=[E\ I]$ has generalized $n$-fold structure with one additional brick.
Residual variables for local equations join their corresponding bricks; those for global equations form the additional brick, whose local block is zero.
Every brick has width at most $2k$, and every matrix entry has absolute value at most $\Delta$.
Padding with zero rows makes the local row dimensions uniform.

Given \(\vect x\), for each admissible $\vect y$, the equality uniquely determines $\vect s$.
Therefore, this minimum ranges over finitely many pairs $(\vect y,\vect s)$, despite the absence of explicit bounds on $\vect s$.
The residual penalty lets us evaluate $F$ even when fixing $\vect x$ makes the original equations infeasible.
This is necessary because the finite-function lemma in \cref{sec:finite-lemma} does not allow $+\infty$ values.

\begin{lemma}[Exact penalty]\label{lem:exact}
  For every \(\vect x\in\Z^k\), the value \(F(\vect x)\) is finite, and the minimum in \eqref{eq:value} is attained by a pair \((\vect y,\vect s)\).
  Let \(\vect x^\star\) minimize \(F\) with \(\vect \ell^{(0)} \leq \vect x^\star \leq \vect u^{(0)}\), and let \((\vect y^\star,\vect s^\star)\) attain the corresponding inner minimum.
  Then \eqref{eq:fourblock} is feasible if and only if
  \[
    Z\vect x^\star+E\vect y^\star=\vect b, \qquad \ellloc \leq \vect y^\star \leq \uloc.
  \]
  In this case, \((\vect x^\star,\vect y^\star)\) is an
  optimal solution of \eqref{eq:fourblock}.
\end{lemma}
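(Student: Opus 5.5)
We split the argument into finiteness, the feasibility characterization, and optimality. Finiteness and attainment are immediate. Since all local bounds are finite and every interval is nonempty, the box $\ellloc\le\vect y\le\uloc$ contains finitely many integer points, and at least one. For fixed $\vect x$, each such $\vect y$ determines $\vect s=\vect b-Z\vect x-E\vect y$ uniquely, and this $\vect s$ is integral. So the inner minimum in \eqref{eq:value} is a minimum of finitely many real numbers over a nonempty set. It is therefore finite and attained.

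The key comparison is between a feasible and an infeasible inner solution. Take any $\vect x$ in the global box and two inner candidates $(\vect y,\vect s)$ and $(\vect y',\vect s')$ with $\vect s=\vect 0$ and $\vect s'\ne\vect 0$. Because $\vect s'$ is integral, $\norm{\vect s'}_1\ge1$, so its penalty is at least $K=W+1$. On the other hand, the local objective differs between any two points of the box by at most $\sum_{i\ge1}\sum_j|c^{(i)}_j|(u^{(i)}_j-\ell^{(i)}_j)\le W$. The same estimate holds when $\vect x$ also varies within its box, since the $i=0$ term of $W$ covers the global part. Consequently, any pair $(\vect x,\vect y)$ that is feasible for \eqref{eq:fourblock} yields a penalized value strictly smaller than the value of every pair, taken over the whole global and local box, with nonzero residual.

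Now we prove the equivalence. Suppose \eqref{eq:fourblock} is feasible with some solution $(\bar{\vect x},\bar{\vect y})$. Then $F(\bar{\vect x})\le\langle\vect c,(\bar{\vect x},\bar{\vect y})\rangle$. Suppose the minimizer $\vect x^\star$ in the box had an inner optimum with $\vect s^\star\ne\vect 0$. Then
\[
F(\vect x^\star)\ge\langle\vect c,(\vect x^\star,\vect y^\star)\rangle+K>\langle\vect c,(\bar{\vect x},\bar{\vect y})\rangle\ge F(\bar{\vect x}),
\]
where the strict inequality uses that the objective varies by at most $W<K$ over the product box. This contradicts the minimality of $\vect x^\star$, so $\vect s^\star=\vect 0$ and $Z\vect x^\star+E\vect y^\star=\vect b$. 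The bound condition on $\vect y^\star$ holds by construction. Conversely, if these conditions hold, then $(\vect x^\star,\vect y^\star)$ is itself feasible, because $\vect x^\star$ lies in the global box.

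Optimality in the feasible case follows directly. For every feasible $(\vect x,\vect y)$ we have
\[
\langle\vect c,(\vect x^\star,\vect y^\star)\rangle=F(\vect x^\star)\le F(\vect x)\le\langle\vect c,(\vect x,\vect y)\rangle,
\]
using $\vect s^\star=\vect 0$ for the equality and the fact that $(\vect y,\vect 0)$ is admissible in the inner problem at $\vect x$ for the last inequality.

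The only delicate step is bounding the objective variation over the product box by $W$. For a linear function on a box, the difference between two points is at most $\sum|c_j|(u_j-\ell_j)$, coordinate by coordinate. This is exactly how $W$ is defined, and it explains why $K=W+1$ suffices once integrality gives $\norm{\vect s}_1\ge1$.
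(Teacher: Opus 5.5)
Your proof is correct and follows the same approach as the paper. Finiteness comes from the finite, nonempty local box with a uniquely determined integral residual. The exact-penalty step uses that any nonzero integral residual costs at least $K=W+1$, which exceeds the objective variation $W$ over the product box. You spell out the contradiction with the minimality of $\vect x^\star$ and the converse direction more explicitly than the paper's terser wording, but the reasoning is the same.
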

\begin{proof}
  Assume w.l.o.g.\ that \(\ellloc \leq \uloc\).
  For fixed $\vect x$, each local vector \(\vect y\) determines exactly one residual vector \(\vect s\), so the minimum defining $F(\vect x)$ exists and is finite.

  Suppose a feasible vector $(\vect x,\vect y)$ to~\eqref{eq:fourblock} exists.
  Every infeasible pair $(\vect{\bar x},\vect{\bar y})$ to~\eqref{eq:fourblock} has a nonzero integral residual \(\vect {\bar s}\) and hence
  \[
    \langle \vect c, (\vect{\bar x}, \vect{\bar y})\rangle+K\norm{\vect b-Z\vect{\bar x}-E\vect{\bar y}}_1
    \ge \langle \vect c, (\vect x,\vect y)\rangle-W+K
    = \langle \vect c, (\vect x,\vect y) \rangle+1.
  \]
  Thus every minimizer of the penalized objective in~\eqref{eq:value} is a feasible solution to~\eqref{eq:fourblock} whenever~\eqref{eq:fourblock} is feasible.

  On feasible pairs \((\vect x, \vect y)\) to~\eqref{eq:fourblock} the term \(K\norm{\vect b-Z\vect{ x}-E\vect{ y}}_1\) equals zero, so minimizing the penalized objective in~\eqref{eq:value} minimizes the original objective.
  Finally, minimizing~\eqref{eq:value} and recovering an attaining local vector \(\vect y\) is equivalent to jointly minimizing this penalized objective over both boxes.
  Both conclusions follow.
\end{proof}
\begin{lemma}[Value and witness oracle]\label{lem:oracle}
  Given $\vect x\in\Z^k$, one can compute $F(\vect x)$ and an attaining local vector $\vect y$ in time
  \[
    f(k,\Delta) \cdot {|I|}^{O(1)},
  \]
  where \(f\) is a computable function.
\end{lemma}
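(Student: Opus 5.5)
The plan is to reduce evaluating $F(\vect x)$ to a single linear generalized $n$-fold integer program and invoke \cref{thm:nfold}. The inner minimum in \eqref{eq:value} has an $\ell_1$ penalty. We linearize it by splitting $\vect s=\vect s^+-\vect s^-$ with $\vect s^+,\vect s^-\ge\vect 0$ integral, and replacing $K\norm{\vect s}_1$ by $K\langle\mathds 1,\vect s^++\vect s^-\rangle$. This gives the program
\[
  \min\Bigl\{\langle\cloc,\vect y\rangle+K\langle\mathds 1,\vect s^++\vect s^-\rangle : E\vect y+\vect s^+-\vect s^-=\vect b-Z\vect x,\ \ellloc\le\vect y\le\uloc,\ \vect s^\pm\ge\vect 0\Bigr\}.
\]
Because $K>0$, any optimum has $s^+_j s^-_j=0$ for every $j$: subtracting $\min(s_j^+,s_j^-)$ from both stays feasible and strictly decreases the cost otherwise. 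Hence its optimal value equals the inner minimum of \eqref{eq:value}, and its $\vect y$-part is an attaining local vector. Then $F(\vect x)$ is $\langle\vect c^{(0)},\vect x\rangle$ plus this value.

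Next I check that the constraint matrix $[E\ I\ {-I}]$ fits \cref{thm:nfold}. As remarked after \eqref{eq:value}, the slack variables for the $j$-th local equations are attached to brick $j$. The slacks for the global equations form one extra brick whose diagonal block is zero and whose linking block is $[I\ {-I}]$. After padding, every brick has width at most $3k$ and row dimension $k$, and all entries have absolute value at most $\Delta$ (using $\Delta\ge1$). The right-hand side $\vect b-Z\vect x$ is integral and computable in polynomial time.

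The main obstacle is that \cref{thm:nfold} requires finite variable bounds, while $\vect s^\pm$ is unbounded. I would bound them explicitly. For any admissible $\vect y$, the residual satisfies $\norm{\vect s}_\infty\le\norm{\vect b-Z\vect x}_\infty+k\Delta\max(\norm{\ellloc}_\infty,\norm{\uloc}_\infty)=:R$. At the optimum $\vect s^+$ and $\vect s^-$ are the positive and negative parts of $\vect s$, so imposing $\vect 0\le\vect s^\pm\le R\mathds 1$ changes nothing. Here $R$ has encoding length polynomial in $|I|$ and in the encoding length of $\vect x$. I would also note that $K$ has polynomial encoding length by \eqref{eq:penalty-weight}, since $W$ is a sum of products of input numbers.

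Together, these give a program of encoding length $(|I|+\langle\vect x\rangle)^{O(1)}$, solvable in time $f(3k,\Delta)\cdot|I|^{O(1)}$. Feasibility holds automatically, since the bounds on $\vect y$ are nonempty and $\vect s$ absorbs the residual.
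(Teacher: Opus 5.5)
Your route is the same as the paper's: split $\vect s=\vect s^+-\vect s^-$, put explicit finite bounds on the slacks, attach the local-row slacks to their bricks, put the global-row slacks into one extra brick with linking block $[I_k\ {-I_k}]$ and zero diagonal block, and then call \cref{thm:nfold}. However, there is a concrete error in the step you yourself identify as the main obstacle. The bound $R=\norm{\vect b-Z\vect x}_\infty+k\Delta\max(\norm{\ellloc}_\infty,\norm{\uloc}_\infty)$ is valid only for the local rows. Each of the $k$ global rows of $E$ is a row of $[B^{(1)}\ \cdots\ B^{(n)}]$ and has up to $nk$ nonzero entries, so the residual there can be of order $nk\Delta Y$.

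For a concrete failure, take $k=\Delta=1$, $A^{(0)}=C^{(i)}=D^{(i)}=0$, $B^{(i)}=1$, $\vect b=\vect 0$ and $\ell^{(i)}=u^{(i)}=1$. The only admissible $\vect y$ gives global residual $-n$, while $R=1$. So for $n\ge2$ your truncated program is infeasible, and your claim that imposing $\vect 0\le\vect s^\pm\le R\mathds 1$ ``changes nothing'' fails. In general, the truncation can also cut off the true minimizer.

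The repair is immediate. You can take $R=\norm{\vect b-Z\vect x}_\infty+nk\Delta Y$ with $Y=\max\{\norm{\ellloc}_\infty,\norm{\uloc}_\infty\}$, which still has polynomial encoding length since $n\le|I|$. Alternatively, use the paper's coordinatewise bound $\vect S(\vect x)=|\vect b-Z\vect x-E\ellloc|+|E|(\uloc-\ellloc)$, which automatically accounts for the row lengths of $E$. With that change, the rest of your argument goes through and matches the paper's proof: optimality of the positive and negative parts, the brick structure of $[E\ I\ {-I}]$, and feasibility of the bounded program.
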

\begin{proof}
  To obtain the desired running time, we want to apply~\cref{thm:nfold}.
  For this, we have to show that the function \(F(\vect x)\) in~\eqref{eq:value} can be formulated as a linear \(n\)-fold.
  This can be done with standard techniques.
  We provide the full proof for completeness.

  Fix $\vect x\in\Z^k$.
  We first show how to linearize the objective.
  Set
  \begin{equation*}
    \vect s^0(\vect x)=\vect b-Z\vect x-E\ellloc,\qquad
    \vect S(\vect x)=|\vect s^0(\vect x)|+|E|(\uloc-\ellloc),
  \end{equation*}
  where \(\vect S(\vect x)\) is the vector of slack bounds.
  If $\ellloc\le\vect y\le\uloc$, then $0\le\vect y-\ellloc\le\uloc-\ellloc$.
  Applying the triangle inequality componentwise gives
  \[
    \begin{aligned}
      |\vect b-Z\vect x-E\vect y|
       & =|\vect s^0(\vect x)-E(\vect y-\ellloc)|        \\
       & \le |\vect s^0(\vect x)|+|E|\,|\vect y-\ellloc| \\
       & =|\vect s^0(\vect x)|+|E|(\vect y-\ellloc)      \\
       & \le |\vect s^0(\vect x)|+|E|(\uloc-\ellloc)
      =\vect S(\vect x).
    \end{aligned}
  \]
  The last inequality uses the nonnegativity of every entry of $|E|$.
  Consider the bounded linear integer program
  \begin{equation}\label{eq:linear-oracle}
    \begin{aligned}
      \min\quad         & \langle \cloc, \vect y \rangle
      +\langle (K\mathds{1}), (\vect s^++\vect s^-)\rangle               \\
      \text{s. t.}\quad & E\vect y+\vect s^+-\vect s^-=\vect b-Z\vect x, \\
                        & \ellloc\le\vect y\le\uloc,\qquad
      0\le\vect s^+,\vect s^-\le\vect S(\vect x),                        \\
                        & \vect y,\vect s^+,\vect s^-\text{ integral}.
    \end{aligned}
  \end{equation}
  For each $\vect y$, the positive and negative parts of $\vect b-Z\vect x-E\vect y$ satisfy the slack bounds and minimize $\langle (K\mathds{1}), (\vect s^++\vect s^-)\rangle$ to $\norm{\vect b-Z\vect x-E\vect y}_1$.
  Indeed, $s_j^++s_j^-\ge|s_j^+-s_j^-|$ for each coordinate.
  Thus \eqref{eq:linear-oracle} is feasible and its optimum equals the inner minimum in \eqref{eq:value}.
  Adding $\langle\vect c^{(0)},\vect x\rangle$ gives $F(\vect x)$, and the optimal $\vect y$ is the required witness.

  To see the generalized $n$-fold structure, group the slack variables for each local system with its original brick.
  The resulting global and local blocks of brick $i$ are
  \[
    [B^{(i)}\ 0\ 0]\qquad\text{and}\qquad[D^{(i)}\ I_k\ {-I_k}].
  \]
  The global-row slacks form one additional brick with global block $[I_k\ {-I_k}]$ and local equations $0=0$. Hence there are $n+1$ bricks, with $k$ global rows, $k$ local rows per brick after padding, and at most $3k$ variables per brick. All matrix entries have absolute value at most $\Delta$.

  Applying~\cref{thm:nfold} proves the claimed running time.
\end{proof}

\section{Periodic Convexity}
\label{sec:convexity}
We now establish the periodic convexity of \(F\), which will allow us to optimize over the global variables one residue class at a time.
We first prove midpoint convexity on residue classes modulo \(N\) and then obtain convex extensibility on residue classes modulo \(M=kN\).

\subsection{Midpoint Convexity via Conformal Bisection}

To prove midpoint convexity of $F$, we apply conformal bisection simultaneously to the local variables and their residuals. Note that $Q=[E\ I]$ is a generalized $n$-fold matrix. Let $N$ be the positive integer from \Cref{prop:idp} for $Q$. We define for each $\vect r\in\{0,\ldots,N-1\}^k$ the following function:
\begin{align}\label{eq:def-h}
  H_{\vect r}: \mathbb{Z}^k \rightarrow\mathbb{Z}, \vect v \mapsto F(\vect r + N\vect v)
\end{align}

\begin{lemma}[Periodic midpoint convexity]\label{lem:periodic-midpoint}
  For each $\vect r\in\{0,\ldots,N-1\}^k$, the function \(H_{\vect r}\) defined in~\eqref{eq:def-h} is finite and midpoint-convex.
\end{lemma}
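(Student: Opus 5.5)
Finiteness of $H_{\vect r}$ is immediate from \cref{lem:exact}, since $H_{\vect r}(\vect v)=F(\vect r+N\vect v)$ and $F$ is finite everywhere. For midpoint convexity, fix $\vect a,\vect b\in\Z^k$ with $\vect a+\vect b\in 2\Z^k$, put $\vect m=(\vect a+\vect b)/2$, and write $\vect x_a=\vect r+N\vect a$, $\vect x_b=\vect r+N\vect b$, $\vect x_m=\vect r+N\vect m$. Then $\vect x_a-\vect x_m=N(\vect a-\vect b)/2$ and $\vect x_m-\vect x_b=N(\vect a-\vect b)/2$, so $\vect x_a-\vect x_b=2N\vect w_0$ with $\vect w_0=(\vect a-\vect b)/2\in\Z^k$. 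By \cref{lem:exact}, choose minimizers $(\vect y_a,\vect s_a)$ and $(\vect y_b,\vect s_b)$ of the inner problems at $\vect x_a$ and $\vect x_b$.

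Set $\vect z_a=(\vect y_a,\vect s_a)$, $\vect z_b=(\vect y_b,\vect s_b)$ and $\vect\delta=\vect z_b-\vect z_a$. Subtracting the constraints $Q\vect z_a=\vect b-Z\vect x_a$ and $Q\vect z_b=\vect b-Z\vect x_b$ gives
\[
  Q\vect\delta = Z(\vect x_a-\vect x_b)=2N\,Z\vect w_0 .
\]
Since $Z\vect w_0$ is integral, \cref{cor:bisection} applies to $Q=[E\ I]$ with $\vect w=Z\vect w_0$. It yields $\vect\delta=\vect\delta'+\vect\delta''$ with $\vect\delta',\vect\delta''\sqsubseteq\vect\delta$ and $Q\vect\delta'=Q\vect\delta''=NZ\vect w_0$.

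Define two candidate solutions at the midpoint, $\vect z'=\vect z_a+\vect\delta'$ and $\vect z''=\vect z_b-\vect\delta'' \,(=\vect z_a+\vect\delta')$. These coincide, so a symmetric version is needed: take $\vect z_1=\vect z_a+\vect\delta'$ and $\vect z_2=\vect z_a+\vect\delta''$. Both satisfy $Q\vect z_i=\vect b-Z\vect x_a+NZ\vect w_0=\vect b-Z\vect x_m$, because $\vect x_a-\vect x_m=N\vect w_0$. Both are also feasible for the box. Conformality puts each coordinate of $\vect y$-part of $\vect z_a+\vect\delta'$ between the corresponding coordinates of $\vect y_a$ and $\vect y_b$, so it lies in $[\ellloc,\uloc]$; the same holds for $\vect z_2$. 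Moreover $\vect z_1+\vect z_2=\vect z_a+\vect z_b$.

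The objective $\phi(\vect y,\vect s)=\langle\cloc,\vect y\rangle+K\norm{\vect s}_1$ is linear in $\vect y$. It is therefore enough to show $\norm{\vect s_1}_1+\norm{\vect s_2}_1\le\norm{\vect s_a}_1+\norm{\vect s_b}_1$, and this is where conformality does the real work. Coordinatewise, $s_{1,j}=s_{a,j}+\delta'_j$ and $s_{2,j}=s_{a,j}+\delta''_j$, where $\delta'_j,\delta''_j$ have the same sign as $\delta_j=s_{b,j}-s_{a,j}$ and sum to $\delta_j$. Both values thus lie in the interval between $s_{a,j}$ and $s_{b,j}$, and their sum equals $s_{a,j}+s_{b,j}$. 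Convexity of $\abs{\cdot}$ then gives $\abs{s_{1,j}}+\abs{s_{2,j}}\le\abs{s_{a,j}}+\abs{s_{b,j}}$, since the pair $(s_{1,j},s_{2,j})$ is majorized by the pair $(s_{a,j},s_{b,j})$. Hence $\phi(\vect z_1)+\phi(\vect z_2)\le\phi(\vect z_a)+\phi(\vect z_b)$. The linear terms in $\vect x$ also match: $\langle\vect c^{(0)},\vect x_m\rangle$ is the average of the values at $\vect x_a$ and $\vect x_b$. Combining these,
\[
  2H_{\vect r}(\vect m)\le \langle \vect c^{(0)},2\vect x_m\rangle+\phi(\vect z_1)+\phi(\vect z_2)\le F(\vect x_a)+F(\vect x_b),
\]
which is the midpoint inequality. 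The main point to check carefully is the choice of $\vect w$ in \cref{cor:bisection}: the factor $2N$ is available exactly because both points lie in the same residue class modulo $N$ and their difference is even after dividing by $N$. The remaining steps are routine.
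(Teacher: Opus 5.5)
Your proposal is correct and follows the paper's proof essentially step by step. You apply \cref{cor:bisection} to $Q=[E\ I]$ on the difference of the two optimal witnesses $(\vect y,\vect s)$, form the midpoint candidates $\vect z_a+\vect\delta'$ and $\vect z_a+\vect\delta''$, and then use conformality for box feasibility and convexity of $\abs{\cdot}$ for the residual penalty. The detour through $\vect z_b-\vect\delta''$, which coincides with $\vect z_a+\vect\delta'$, is harmless and can simply be deleted.
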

\begin{proof}
  Finiteness follows from \Cref{lem:exact}.
  Let $\vect a',\vect a'' \in\Z^k$ have integer midpoint $\vect w=(\vect a'+\vect a'')/2$.
  Our goal is now to show midpoint-convexity, i.e.,
  \[
    2F(\vect r + N\vect w) \leq F(\vect r + N \vect a') + F(\vect r + N \vect a'').
  \]
  Choose optimal witnesses $\vect z'=(\vect y',\vect s')$ and $\vect z''=(\vect y'',\vect s'')$ in \eqref{eq:value} for the global vectors $\vect r+N\vect a'$ and $\vect r+N\vect a''$.
  Their difference $\vect \delta=\vect z''- \vect z'$ satisfies
  \[
    Q\vect \delta=-NZ(\vect a''-\vect a')
    =2N\left(-Z\frac{\vect a''-\vect a'}{2}\right).
  \]
  Note that the vector \(-Z\frac{\vect a''-\vect a'}{2}\) is integral. By \cref{cor:bisection}, write
  $\vect \delta=\vect \delta'+ \vect \delta''$ with
  $\vect \delta', \vect \delta''\sqsubseteq\vect \delta$ and
  $Q\vect \delta'=Q \vect \delta'' =-NZ(\vect a''-\vect a')/2$. Define
  \[
    \vect{\widehat z}'=\vect z'+\vect \delta',\qquad
    \vect{\widehat z}''=\vect z'+\vect \delta''.
  \]
  Write the new points as \(\widehat{\vect z}'=(\widehat{\vect y}',\widehat{\vect s}')\) and \(\widehat{\vect z}''=(\widehat{\vect y}'',\widehat{\vect s}'')\).
  Both points satisfy the equations at the midpoint: For $\vect{\widehat z}'$, we have
  \begin{align*}
    Q\widehat{\vect z}' & = Q(\vect z' + \vect \delta')                                            \\
                        & = Q\vect z' + Q\vect \delta'                                             \\
                        & = \vect b - Z(\vect r + N\vect a') + N(-Z\frac{\vect a'' - \vect a'}{2}) \\
                        & = \vect b - Z\vect r - ZN\vect a' - ZN\frac{\vect a'' - \vect a'}{2}     \\
                        & = \vect b - Z\vect r - ZN(\vect a' + \frac{\vect a'' - \vect a'}{2})     \\
                        & = \vect b - Z\vect r - ZN (\frac{\vect a' + \vect a''}{2})               \\
                        & = \vect b-Z(\vect r + N(\frac{\vect a' + \vect a''}{2}))                 \\
                        & =\vect b-Z(\vect r+N\vect w)
  \end{align*}
  Since $Q\widehat{\vect z}'' = Q(\vect z' + \vect \delta'') = Q\vect z' + Q\vect \delta'' =Q\vect z' + Q\vect \delta'$, we get analogously $Q\widehat{\vect z}''=\vect b-Z(\vect r+N\vect w)$.
  Conformality places each coordinate of either new point between the corresponding coordinates of \(\vect z'\) and \(\vect z''\).
  In particular, $\ellloc \leq \vect {\widehat y}', \vect {\widehat y}'' \leq \uloc$, so the new points are feasible for the minimum defining \(F(\vect r+N\vect w)\).
  Moreover,
  \begin{equation}\label{eq:sum-witnesses}
    \widehat{\vect z}'+\widehat{\vect z}''
    =2\vect z'+\vect\delta'+\vect\delta''
    =2\vect z'+\vect\delta
    =\vect z'+\vect z''.
  \end{equation}

  We next show that the total cost of the two new points is at most that of the original witnesses.
  For a point
  \(\vect z=(\vect y,\vect s)\), write
  \[
    \Psi(\vect z)
    =\sum_{i=1}^n
    \langle\vect c^{(i)},\vect y^{(i)}\rangle
    +K\norm{\vect s}_1.
  \]
  By \eqref{eq:sum-witnesses}, the local components satisfy
  \[
    \widehat{\vect y}'+\widehat{\vect y}''
    =\vect y'+\vect y''.
  \]
  Their total linear local cost is therefore unchanged.

  For the residual penalty \(\hat {\vect s}'\) and \(\hat {\vect s}''\), fix a coordinate \(h\).
  The values \(\widehat s'_h,\widehat s''_h\) lie between
  \(s'_h,s''_h\) and have the same sum. Consequently, for
  some \(\lambda_h\in[0,1]\),
  \[
    \widehat s'_h
    =\lambda_h s'_h+(1-\lambda_h)s''_h,
    \qquad
    \widehat s''_h
    =(1-\lambda_h)s'_h+\lambda_h s''_h.
  \]
  By the convexity of the absolute value function, we have
  \begin{align*}
    |\widehat s'_h|
     & \leq \lambda_h|s'_h|+(1-\lambda_h)|s''_h|, \\
    |\widehat s''_h|
     & \leq (1-\lambda_h)|s'_h|+\lambda_h|s''_h|.
  \end{align*}
  Addition gives:
  \[
    |\widehat s'_h|+|\widehat s''_h|
    \le |s'_h|+|s''_h|.
  \]
  Summing over \(h\), and using the equality of the total
  linear costs, yields
  \begin{equation}\label{eq:residual-penalty}
    \Psi(\widehat{\vect z}')+\Psi(\widehat{\vect z}'')
    \le \Psi(\vect z')+\Psi(\vect z'').
  \end{equation}

  We can now conclude midpoint convexity.
  Each new point is feasible for the minimum in~\eqref{eq:value} defining \(F(\vect r+N\vect w)\), therefore for both \(\widehat{\vect z} \in \{\widehat{\vect z}', \widehat{\vect z}''\}\):
  \begin{align}\label{eq:midpoint-witness-bound}
    F(\vect r+N\vect w)
    \le
    \langle\vect c^{(0)},\vect r+N\vect w\rangle
    +\Psi(\widehat{\vect z}).
  \end{align}
  Adding these two bounds and applying \eqref{eq:residual-penalty}, we obtain
  \[
    \begin{aligned}
      2F(\vect r+N\vect w)
                                                          & \le
      2\langle\vect c^{(0)},\vect r+N\vect w\rangle
      +\Psi(\widehat{\vect z}')+\Psi(\widehat{\vect z}'') & \text{by~\eqref{eq:midpoint-witness-bound}}  \\
                                                          & \le
      2\langle\vect c^{(0)},\vect r+N\vect w\rangle
      +\Psi(\vect z')+\Psi(\vect z'')                     & \text{by~\eqref{eq:residual-penalty}}        \\
                                                          & =\bigl(
      \langle\vect c^{(0)},\vect r+N\vect a'\rangle
      +\Psi(\vect z')\bigr) +\bigl(
      \langle\vect c^{(0)},\vect r+N\vect a''\rangle
      +\Psi(\vect z'')\bigr)                                                                             \\
                                                          & =F(\vect r+N\vect a')+F(\vect r+N\vect a'').
    \end{aligned}
  \]
  The first equality uses \(\vect w=(\vect a'+\vect a'')/2\) and linearity of the global cost.
  The last equality uses optimality of the original witnesses.
  This proves the required midpoint inequality (see~\cref{def:convexity}).
\end{proof}

\subsection{The Convex Extension}\label{sec:finite-lemma}

Midpoint convexity alone gives inequalities only for pairs with an
integer midpoint. The next lemma converts those inequalities into all
convex-combination inequalities after passing to a coarser lattice.

\begin{lemma}[Finite midpoint-convex functions on a coarser lattice]
  \label{lem:finite-function}
  Let $k\ge1$ and let $h:\Z^k\to\R$ be everywhere finite and
  midpoint-convex. For every $\vect \tau\in\Z^k$, the function
  \[
    \vect v\longmapsto h(\vect \tau+k\vect v)
  \]
  is convex extensible.
\end{lemma}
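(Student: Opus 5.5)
The plan is to verify the standard characterization of convex extensibility for $g(\vect v)\coloneqq h(\vect\tau+k\vect v)$. Since $g$ is finite on $\Z^k$, it is convex extensible as soon as, for every $\vect z\in\Z^k$ and every representation $\vect z=\sum_i\lambda_i\vect p_i$ as a convex combination of integer points, we have $g(\vect z)\le\sum_i\lambda_i g(\vect p_i)$. In that case the lower convex envelope of $g$ agrees with $g$ on $\Z^k$. It is finite on the convex hull of finitely many integer points, which are the only ones used in this verification. A global everywhere-finite extension can then be obtained by the usual pointwise supremum of affine minorants, or by first working on a large box and letting it grow.

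By Carathéodory, it suffices to treat affinely independent $\vect p_0,\dots,\vect p_r$ with $r\le k$. Translating back, the claim becomes the following. Let $\vect q_i=\vect\tau+k\vect p_i$ and $\vect q=\vect\tau+k\vect z$. Then $h(\vect q)\le\sum\lambda_i h(\vect q_i)$ whenever $\vect q=\sum\lambda_i\vect q_i$.

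The first ingredient is one-dimensional. Midpoint convexity of $h$ restricted to any lattice line $t\mapsto h(\vect a+t\vect d)$, with $\vect d\in\Z^k$, gives $2h(\vect a+t\vect d)\le h(\vect a+(t-1)\vect d)+h(\vect a+(t+1)\vect d)$, by taking the pair $t\pm1$, whose sum is even. Hence all second differences are nonnegative, and $h$ is discretely convex along every lattice line. A finite one-dimensional discretely convex function is bounded by linear interpolation between any two lattice points of the line.

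The second ingredient is a grid inside the simplex. The points $\sum_i (m_i/k)\vect q_i$ with $m_i\in\Z_{\ge0}$ and $\sum m_i=k$ equal $\vect\tau+\sum_i m_i\vect p_i$, which are integer points. Moving along a direction $(\vect q_i-\vect q_j)/k=\vect p_i-\vect p_j\in\Z^k$ stays on this grid. For a grid point, I move along such a direction until $m_i$ or $m_j$ becomes $0$. Using the one-dimensional interpolation bound and induction on the number of nonzero barycentric coordinates, I get $h\le$ the affine interpolant of the vertex values at every grid point, because the interpolant is affine in the barycentric coordinates.

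The main obstacle is the target point $\vect q$ itself. Its barycentric coordinates $\lambda_i$ are arbitrary reals, not multiples of $1/k$, so $\vect q$ need not be a grid point. This is exactly where the factor $k$ (the dimension bound $r\le k$) must enter. My first attempt is as follows. Since $\vect q-\vect q_0=k(\vect z-\vect p_0)$ with $\vect z-\vect p_0\in\Z^k$, I would look for a subsimplex with vertices on the $1/k$-grid containing $\vect q$, on which the same line-by-line reduction applies. The relevant lines are now lattice lines through $\vect q$ in directions $\vect p_i-\vect p_0$. I would try to show that walking from $\vect q$ in direction $\pm(\vect p_i-\vect p_0)$ reaches lower-dimensional faces of the simplex $\conv\{\vect q_0,\dots,\vect q_r\}$ at integer points. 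Concretely, I would pick a coordinate $\lambda_i$ and step by integer multiples until it leaves $[0,1]$, and use the $k$-scaling to guarantee that the crossing point is integral.

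If this fails, the fallback is a Carathéodory-type rounding. I would write $\lambda_i=\lfloor k\lambda_i\rfloor/k+\epsilon_i$ and absorb the fractional parts via at most $r\le k$ additional grid vertices. The inequality for $\vect q$ then follows by composing the grid bound with one more application of one-dimensional convexity along a lattice line through $\vect q$.
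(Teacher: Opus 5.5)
You have a genuine gap: the Jensen inequality at the target point $\vect q$ is never proved. The reduction to Jensen inequalities over affinely independent vertices is fine, provided the Carath\'eodory step does not increase the cost (e.g.\ via an LP vertex). So are discrete convexity of $h$ along every lattice line and your induction giving $h\le A$ at the grid points $\vect\tau+\sum_i m_i\vect p_i$ with $\sum_i m_i=k$. But the step you flag as the main obstacle is exactly where the proof stops: you never show $h(\vect q)\le A(\vect q)$ when the scaled barycentric coordinates $\beta_i=k\lambda_i$ of $\vect q$ are non-integral, and both plans you sketch fail in general.

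Walking from $\vect q$ along $\pm(\vect p_i-\vect p_0)$ changes $\beta_i$ and $\beta_0$ by integers. A face is therefore met at a lattice point only if $\beta_i$ or $\beta_0$ is already integral, and the factor $k$ gives no such guarantee. A single one-dimensional step between two grid points fails too, because $\vect q$ need not lie on any segment joining grid points.

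Concretely, take $k=2$, $\vect\tau=\vect 0$, $\vect p_0=(0,0)$, $\vect p_1=(1,0)$, $\vect p_2=(3,7)$ and $\vect z=(2,4)=\tfrac17\vect p_0+\tfrac27\vect p_1+\tfrac47\vect p_2$. Then $\vect q=(4,8)$ has $\beta=(\tfrac27,\tfrac47,\tfrac87)$. Every line through two of the grid points $(0,0),(2,0),(6,14),(1,0),(3,7),(4,7)$ has the form $\beta_i=0$, $\beta_i=1$ or $\beta_i=\beta_j$, so none passes through $\vect q$. Moreover, the three edge-direction lines through $\vect q$ contain no lattice points of $\conv\{(0,0),(2,0),(6,14)\}$ other than $\vect q$ and the interior points $(1,1)$ and $(2,1)$.

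The missing idea is this. In $P=\conv\{k\vect p_0,\dots,k\vect p_r\}$ with $r\le k$, \emph{every} non-vertex lattice point $\vect w=\sum_i\beta_i\vect p_i$ (with $\sum_i\beta_i=k$), grid point or not, is the midpoint of two distinct lattice points of $P$. If some $\beta_i\notin\Z$, then $\sum_i\lfloor 2\beta_i\rfloor\ge 2k-r\ge k$; this is where $r\le k$ is used. So one can choose integers $0\le a_i\le\lfloor 2\beta_i\rfloor$ with $\sum_i a_i=k$ and set $\vect u=\sum_i a_i\vect p_i$ and $\vect v=2\vect w-\vect u$. In the example this gives $\vect u=(4,7)$ and $\vect v=(4,9)$.

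Since $\vect v$ is in general not a grid point, your induction over grid points cannot absorb it. The paper instead argues extremally. If $h-A$ had a positive maximum on $P\cap\Z^k$, take a maximizer of largest Euclidean norm; it is a non-vertex. Midpoint convexity of $h-A$ then forces both $\vect u$ and $\vect v$ to be maximizers. The identity $\tfrac12(\norm{\vect u}_2^2+\norm{\vect v}_2^2)=\norm{\vect w}_2^2+\tfrac14\norm{\vect u-\vect v}_2^2$ shows one of them has strictly larger norm, a contradiction. This yields $h\le A$ on all of $P\cap\Z^k$, including $\vect q$.

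A smaller point: saying the envelope is finite on convex hulls of finitely many integer points does not show it is $>-\infty$ everywhere. You still need an argument such as the paper's reflection $\vect x\mapsto 2\vect z_0-\vect x$, combined with Jensen at a coarse lattice point $\vect z_0$.
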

\begin{proof}
  Translation preserves the hypotheses, so it suffices to treat $\vect \tau=0$.
  We establish four claims: a geometric property of lattice simplices, the Jensen inequality on each simplex, its extension to arbitrary finite convex combinations, and the existence of a finite convex envelope.

  \begin{claim}[Midpoints in dilated lattice simplices]\label{clm:simplex-midpoints}
    Let
    \[
      P=\conv\{k\vect p_0,\ldots,k\vect p_r\},\qquad
      \vect p_0,\ldots,\vect p_r\in\Z^k,\quad 0\le r\le k,
    \]
    where the vectors $\vect p_0,\ldots,\vect p_r$ are affinely independent.
    Every nonvertex $\vect w\in P\cap\Z^k$ is the midpoint of two distinct lattice points of $P$.
  \end{claim}
  \begin{claimproof}
    Fix such a nonvertex $\vect w \in P \cap\Z^k$.
    Since the vertices are affinely independent, there are unique coefficients \(\beta_0,\ldots,\beta_r \in \R_{\geq 0}\) such that
    \begin{equation}\label{eq:scaled-barycentric}
      \vect w=\sum_{i=0}^r\beta_i \vect p_i,\qquad
      \beta_i\ge0,\qquad \sum_{i=0}^r\beta_i=k.
    \end{equation}
    If every \(\beta_i\) is integral, at least two coefficients are positive because \(\vect w\) is not a vertex.
    Choose \(i\ne j\) with \(\beta_i,\beta_j\ge1\), and set
    \[
      \vect u=\vect w-\vect p_i+\vect p_j,\qquad
      \vect v=\vect w+\vect p_i-\vect p_j.
    \]
    These transfers preserve nonnegativity and the sum of the coefficients.
    Thus \(\vect u,\vect v\) are distinct lattice points of \(P\) with midpoint \(\vect w\).

    Otherwise some $\beta_i$ is nonintegral. Since the sum of the fractional parts of the $r+1$ numbers $2\beta_i$ is an integer smaller than $r+1$,
    \[
      \sum_{i=0}^r\lfloor2\beta_i\rfloor\ge2k-r\ge k.
    \]
    Choose integers $a_i$ with $0\le a_i\le\lfloor2\beta_i\rfloor$ and $\sum_i a_i=k$.
    Such a choice exists because the integral upper bounds sum to at least $k$.
    Set
    \[
      \vect u=\sum_{i=0}^r a_i \vect p_i,\qquad \vect v=2\vect w-\vect u.
    \]
    The coefficients of $\vect u$ and of $\vect v=\sum_i(2\beta_i-a_i)\vect p_i$ are nonnegative and sum to $k$, so $\vect u,\vect v\in P$.
    Both are integral, because $a_i$ and $\vect p_i$ are integral for all $i$ and $\vect v=2\vect w-\vect u$ and $\vect w, \vect u$ are integral.
    If $\vect u=\vect v$, uniqueness of the representation~\eqref{eq:scaled-barycentric} would give $a_i=\beta_i$ for all $i$, contradicting the nonintegrality of some $\beta_i$. Thus \(\vect u,\vect v\) are distinct lattice points of \(P\) with midpoint \(\vect w\).
  \end{claimproof}

  \begin{claim}[Jensen inequality on a simplex]\label{clm:simplex-jensen}
    Let $P$ be a simplex as in \cref{clm:simplex-midpoints}, and let
    $A$ be the affine function on $\aff(P)$ agreeing with $h$ at the
    vertices of $P$. Then
    \begin{equation*}
      h(\vect w)\le A(\vect w)\qquad(\vect w\in P\cap\Z^k).
    \end{equation*}
  \end{claim}
  \begin{claimproof}
    We prove this by contradiction. Suppose $h-A$ has a positive maximum on the finite set $P\cap\Z^k$.
    Choose a maximizer $\vect w$ with largest squared Euclidean norm. Since $A$ agrees with $h$ at every vertex of $P$ by definition, we get $h(\vect p)-A(\vect p) = 0$ for every vertex $\vect p \in P$. Since $\vect w$ is a maximizer of $h-A$ and $h-A$ has a positive maximum, we get that $\vect w$ is not a vertex.
    By \cref{clm:simplex-midpoints}, $\vect w=(\vect u+\vect v)/2$ for distinct $\vect u,\vect v\in P\cap\Z^k$.

    We now show that $\vect u$ and $\vect v$ attain the same maximum. Set $g = h-A$. By \Cref{lem:gmidpoint}, we get that $g$ is midpoint-convex. By midpoint convexity (\cref{def:convexity}), we get $g(\vect w) \leq \frac{g(\vect u) + g(\vect v)}{2}$. Since $\vect w$ maximizes $g$, we get $g(\vect u), g(\vect v) \leq g(\vect w)$. Together, we have
    \[g(\vect w) \leq \frac{g(\vect u) + g(\vect v)}{2} \leq g(\vect w)\]
    and thus, $g(\vect u) = g(\vect v) = g(\vect w)$.
    But
    \begin{align*}
      \frac{\norm{\vect u}_2^2+\norm{\vect v}_2^2}{2}
       & =\frac12\sum_{j=1}^k(u_j^2+v_j^2)                        \\
       & =\frac14\sum_{j=1}^k(2u_j^2+2v_j^2)                      \\
       & =\frac14\sum_{j=1}^k(u_j^2+2u_jv_j+v_j^2)
      +\frac14\sum_{j=1}^k(u_j^2-2u_jv_j+v_j^2)                   \\
       & =\frac14\sum_{j=1}^k(u_j+v_j)^2
      +\frac14\sum_{j=1}^k(u_j-v_j)^2                             \\
       & =\sum_{j=1}^k{\frac{(u_j+v_j)^2}{2^2}}
      +\frac14\sum_{j=1}^k(u_j-v_j)^2                             \\
       & =\norm{\vect w}_2^2+\frac{\norm{\vect u-\vect v}_2^2}{4} \\
       & >\norm{\vect w}_2^2,
    \end{align*}

    contradicting the choice of $\vect w$. Thus the claim holds.
  \end{claimproof}

  \begin{claim}[Jensen inequality on the coarser lattice]\label{clm:lattice-jensen}
    For any $\vect p_0,\ldots,\vect p_a\in k\Z^k$ and nonnegative weights $\lambda_0,\ldots,\lambda_a \in \R_{\geq 0}$ with $\sum_i\lambda_i=1$, if $\vect z=\sum_i\lambda_i \vect p_i\in k\Z^k$, then
    \[
      h(\vect z)\le\sum_{i=0}^a\lambda_i h(\vect p_i).
    \]
  \end{claim}
  \begin{claimproof}
    Fix points \(\vect p_0, \ldots, \vect p_a \in k\Z^k\) and weights \(\lambda_0, \ldots, \lambda_a \in \R_{\geq 0}\) with \(\sum \lambda_i = 1\). Let $\vect z = \sum_i \lambda_i\vect p_i \in k\Z^k$.

    Consider the following linear program:
    \begin{align*}
      \min\left\{\sum_{i=0}^{a} \bar{\lambda}_i h(\vect p_i)\right\}       \\
      \text{s.t.}\qquad \sum_{i=0}^a \bar{\lambda}_i \vect p_i & = \vect z \\
      \sum \bar \lambda_i                                      & = 1       \\
      \bar{\lambda}_i                                          & \geq 0.
    \end{align*}
    Since $\vect z = \sum_i \lambda_i\vect p_i \in k\Z^k$, the solution polytope of this linear program is nonempty.
    Since an optimal solution can always be found at a vertex of the solution polytope, there exists an optimal solution with at most \(k+1\) nonzero variables with affinely independent support.
    Assume w.l.o.g.\ that \(\bar{\lambda}^\star_0, \ldots, \bar{\lambda}^\star_{r}\) are the nonzero variables of this solution with \(r \leq k\).
    Then $\vect z = \sum_{i=0}^{r} \bar{\lambda}^\star_i\vect p_i \in k\Z^k$ is an affinely independent representation of \(\vect z\).
    Now, define \(P = \conv\{\vect p_0, \ldots, \vect p_r\}\).
    Let $A$ be the unique affine function on $\aff(P)$ satisfying $A(\vect p_i)=h(\vect p_i)$ for $i=0,\ldots,r$.
    Such a function exists because the vertices are affinely independent.
    Since \(\vect p_i \in k\Z^k\), we can apply~\cref{clm:simplex-jensen} and get \(h(\vect z) \leq A(\vect z)\) for any affine function on \(\aff(P)\) agreeing with \(h\) at the vertices of \(P\).
    Note that
    \[h(\vect z) \leq A(\vect z) = A\left(\sum_{i=0}^r \bar \lambda^\star_i \vect p_i\right) = \sum_{i=0}^r \bar \lambda^\star_i A(\vect p_i)= \sum_{i=0}^r\bar{\lambda}^\star_i h(\vect p_i) \leq \sum_{i=0}^a \lambda_i h(\vect p_i),
    \]
    where the final inequality follows from LP optimality.
  \end{claimproof}

  \begin{claim}[Finite convex envelope]\label{clm:finite-envelope}
    The function $C:\R^k\to\R\cup\{-\infty\}$ defined by
    \begin{equation*}
      C(\vect z)=\inf\left\{
      \sum_{i=1}^a\lambda_i h(\vect p_i):
      \begin{array}{l}
        a\ge1,\ \vect p_i\in k\Z^k,\ \lambda_i\ge0, \\
        \sum_i\lambda_i=1,\ \sum_i\lambda_i \vect p_i=\vect z
      \end{array}\right\}
    \end{equation*}
    is everywhere finite and convex, and satisfies $C(\vect z)=h(\vect z)$ for
    every $\vect z\in k\Z^k$.
  \end{claim}
  \begin{claimproof}
    We first show that every real point has a representation of finite cost.
    Every $\vect x\in\R^k$ lies in a lattice cell with vertices in $k\Z^k$ and is a convex combination of those vertices.
    Since $h$ is finite at each vertex by the assumption of~\cref{lem:finite-function}, this representation has finite cost.
    Thus $C(\vect x)<+\infty$.

    Next, let $\vect z\in k\Z^k$.
    By \cref{clm:lattice-jensen}, every lattice representation of $\vect z$ has cost at least $h(\vect z)$.
    The one-point representation $\vect z=1\cdot\vect z$ has cost exactly $h(\vect z)$.
    Hence
    \[
      C(\vect z)=h(\vect z).
    \]

    We now establish a finite lower bound at an arbitrary point $\vect x\in\R^k$.
    Fix $\vect z_0\in k\Z^k$ and set $\vect q=2\vect z_0-\vect x$.
    Choose one lattice representation of $\vect q$, and denote its finite cost by $U$.

    Consider any lattice representation of $\vect x$, with cost $T$.
    Taking half of each weight in these two representations gives a lattice representation of
    \[
      \frac{\vect x+\vect q}{2}=\vect z_0
    \]
    with cost $(T+U)/2$.
    By \cref{clm:lattice-jensen},
    \[
      h(\vect z_0)\le\frac{T+U}{2},
      \qquad\text{so}\qquad
      T\ge2h(\vect z_0)-U.
    \]
    Thus, every lattice representation of $\vect x$ has cost \(T\ge2h(\vect z_0)-U\).
    Taking the infimum over all such representations yields
    \[
      C(\vect x) \geq 2h(\vect z_0)-U>-\infty.
    \]
    Therefore $C$ is everywhere finite.

    Finally, let $\vect x,\vect y\in\R^k$, $\lambda \in[0,1]$, and $\varepsilon>0$.
    Choose representations of $\vect x$ and $\vect y$ with costs at most $C(\vect x)+\varepsilon$ and $C(\vect y)+\varepsilon$, respectively.
    Multiplying their weights by $\lambda$ and $1-\lambda$ and combining them gives a representation of
    $\lambda\vect x+(1-\lambda)\vect y$. Consequently,
    \[
      C\bigl(\lambda\vect x+(1-\lambda)\vect y\bigr)
      \le \lambda C(\vect x)+(1-\lambda)C(\vect y)
      +\varepsilon.
    \]
    Letting $\varepsilon\to0$ proves convexity.
  \end{claimproof}

  We can now complete the proof of~\Cref{lem:finite-function}.
  By \cref{clm:finite-envelope}, $\vect v\mapsto C(k\vect v)$ is a finite convex extension of $\vect v\mapsto h(k\vect v)$.
  This proves the lemma for $\vect\tau=0$; applying the same argument to $\vect v\mapsto h(\vect\tau+ k\vect v)$ proves the statement for every translation $\vect\tau\in\Z^k$.
\end{proof}

\needspace{7\baselineskip}
\begin{corollary}[The optimization period]\label{cor:period}
  Assume $k\ge1$ and set $M=kN$. For every $\vect q\in\Z^k$, the function
  \begin{equation*}
    F_{\vect q}(\vect v)=F(\vect q+M\vect v),\qquad \vect v\in\Z^k,
  \end{equation*}
  is convex extensible.
\end{corollary}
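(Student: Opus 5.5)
The plan is to combine \cref{lem:periodic-midpoint} with \cref{lem:finite-function}, after decomposing $\vect q$ into a residue modulo $N$ and a shift on the coarser lattice. Fix $\vect q\in\Z^k$. Write $\vect q=\vect r+N\vect \tau$ with $\vect r\in\{0,\ldots,N-1\}^k$ and $\vect\tau\in\Z^k$, taking $\vect r$ as the componentwise residue of $\vect q$ modulo $N$ and $\vect\tau=(\vect q-\vect r)/N$, which is integral.

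Then, for every $\vect v\in\Z^k$,
\[
  F_{\vect q}(\vect v)=F(\vect r+N\vect\tau+kN\vect v)=F\bigl(\vect r+N(\vect\tau+k\vect v)\bigr)=H_{\vect r}(\vect\tau+k\vect v),
\]
by the definition \eqref{eq:def-h}. By \cref{lem:periodic-midpoint}, $H_{\vect r}\colon\Z^k\to\R$ is everywhere finite and midpoint-convex. Finiteness is exactly where the penalty construction matters; it comes from \cref{lem:exact}.

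Applying \cref{lem:finite-function} with $h=H_{\vect r}$ and shift $\vect\tau$ shows that $\vect v\mapsto H_{\vect r}(\vect\tau+k\vect v)$ is convex extensible. By the identity above, this function is $F_{\vect q}$.

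I expect no real obstacle; the substantive work is already contained in the two cited lemmas. The only points to check are that the residue decomposition keeps $\vect\tau$ integral, and that the lattice factor $k$ in \cref{lem:finite-function} composes correctly with the factor $N$ to give period $M=kN$. Both follow directly from the computation displayed above.
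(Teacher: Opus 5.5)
Your proof is correct and is the same argument as the paper's. The paper also writes $\vect q=\vect r+N\vect\tau$ with $\vect r\in\{0,\ldots,N-1\}^k$ and $\vect\tau$ integral, rewrites $F_{\vect q}(\vect v)=H_{\vect r}(\vect\tau+k\vect v)$, and then combines \cref{lem:periodic-midpoint} (finiteness and midpoint convexity of $H_{\vect r}$) with \cref{lem:finite-function} applied with shift $\vect\tau$.
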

\begin{proof}
  Choose $\vect r\in\{0,\ldots,N-1\}^k$ congruent to $\vect q$ modulo $N$, and
  let $\vect \tau=(\vect q-\vect r)/N\in\Z^k$. Then
  \begin{align*}
    F_{\vect q}(\vect v) & = F(\vect q + Nk\vect v)                       \\
                         & = F(\vect r + \vect q - \vect r + Nk\vect v)   \\
                         & = F(\vect r + N((\vect q-\vect r)/N+k\vect v)) \\
                         & = F(\vect r + N(\vect \tau+k\vect v))          \\
                         & =H_{\vect r}(\vect \tau+k\vect v).
  \end{align*}
  By \cref{lem:periodic-midpoint,lem:finite-function} this function is convex extensible.
\end{proof}

\section{The Algorithm}\label{sec:algorithm}
For now, assume that all variables are bounded; we prove the generalization later.
Let $M \coloneqq kN$ as in~\Cref{cor:period}.

For a residue $\vect q\in\{0,\ldots,M-1\}^k$, substitute $\vect x=\vect q+M\vect v$.
Since \(\vect \ell^{(0)} \leq \vect x \leq \vect u^{(0)}\), the feasibility region for the vector \(\vect v\) is \(\vect \alpha_{\vect q} \leq \vect v \leq \vect \beta_{\vect q}\), with:
\begin{equation}\label{eq:phase-box}
  \vect \alpha_{\vect q}=\left\lceil\frac{\vect \ell^{(0)}-\vect q}{M}\right\rceil,
  \qquad
  \vect \beta_{\vect q}=\left\lfloor\frac{\vect u^{(0)}-\vect q}{M}\right\rfloor.
\end{equation}
The following consequence of \cite[Theorem~1]{VGZC20} gives the
optimization step for each residue class.

\begin{lemma}[Optimization within a residue class]\label{lem:box}
  Suppose that $N$ and $M=kN$ have been computed.
  Given a residue $\vect q\in\{0,\ldots,M-1\}^k$, one can compute a minimizer $\vect v \in \Z^k$ of
  \begin{equation}\label{eq:min-global}
    \min\{F_{\vect q}(\vect v): \vect\alpha_q \leq \vect v \leq \vect\beta_q\},
  \end{equation}
  its value, and a local vector $\vect y$ attaining the inner minimum in \eqref{eq:value} at $\vect x=\vect q+M\vect v$ or report infeasibility, in time
  \begin{equation*}
    f(k,\Delta) \cdot (|I|+\log N)^{O(1)}.
  \end{equation*}
\end{lemma}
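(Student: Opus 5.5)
The plan is to reduce \eqref{eq:min-global} to \cref{thm:veselov}, using the value oracle of \cref{lem:oracle} as the comparison oracle. Dimension is $d=k$ throughout.

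First, if $\vect\alpha_{\vect q}\not\le\vect\beta_{\vect q}$ in some coordinate, the residue class contains no admissible global vector and we report infeasibility for this class. This is the ``report infeasibility'' outcome; global infeasibility is decided later across all residues via \cref{lem:exact}.

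Otherwise, consider the restriction $g$ of $F_{\vect q}$ to the box $D=\{\vect v\in\Z^k:\vect\alpha_{\vect q}\le\vect v\le\vect\beta_{\vect q}\}$. By \cref{cor:period}, $F_{\vect q}$ agrees on $\Z^k$ with a finite convex function $\widehat F_{\vect q}$. Hence $g$ is the restriction of a convex function to the lattice points of a convex set, and I would verify that it is discrete convic.

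To do so, take $\vect y\in D$ and $\vect z\in D$ with $\vect z=\vect y+\sum_j\mu_j(\vect y-\vect x_j)$, where $\mu_j\ge0$ and $g(\vect x_j)\le g(\vect y)$. Set $\mu=\sum_j\mu_j$. Then
\[
\vect y=\tfrac{1}{1+\mu}\vect z+\sum_j\tfrac{\mu_j}{1+\mu}\vect x_j
\]
is a convex combination. Convexity of $\widehat F_{\vect q}$ gives
\[
g(\vect y)\le\tfrac{1}{1+\mu}g(\vect z)+\tfrac{\mu}{1+\mu}g(\vect y),
\]
so $g(\vect y)\le g(\vect z)$. Only convexity of the extension is used here, not convexity of $D$ beyond its lattice points.

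Next, I have to match the domain format of \cref{thm:veselov}, which is a ball $B_\rho\cap\Z^k$ rather than a box. I would translate the box so it contains the origin, say centered at $\vect\alpha_{\vect q}$, and take $\rho=\sqrt k\,\norm{\vect\beta_{\vect q}-\vect\alpha_{\vect q}}_\infty$. Then $\log\rho\le|I|^{O(1)}$, since the bounds are part of the input and $M\ge1$.

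This mismatch is the step I expect to be the main obstacle: $F_{\vect q}$ is finite outside the box, but its minimizer there may lie outside the box. My first attempt is to extend $g$ to the ball by adding a steep convex penalty. Concretely, I would define
\[
\widetilde g(\vect v)=F_{\vect q}(\vect v)+L\,\dist_1(\vect v,\text{box}),
\]
which is convex extensible because both summands are. Here $L$ is an explicit bound with polynomial encoding, exceeding the maximal slope of $\widehat F_{\vect q}$ near the box. A crude such bound is $K$ times $M\cdot k\Delta$ times the number of rows plus $M\norm{\vect c^{(0)}}_1$, because shifting $\vect x$ by $M\vect e_j$ changes the residual by at most $M$ times column $j$ of $Z$. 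With this $L$, every minimizer of $\widetilde g$ over the ball lies in the box. The fallback is to invoke the variant of \cite{VGZC20} that works on polyhedral domains given by a separation oracle.

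Each comparison-oracle call is answered with two evaluations of \cref{lem:oracle} at $\vect x=\vect q+M\vect v$. The entries of $\vect x$ have encoding length $O(|I|+\log N)$, so each call costs $f(k,\Delta)(|I|+\log N)^{O(1)}$. \cref{thm:veselov} uses $2^{O(k^2\log k)}\log\rho$ calls and $2^{O(k^2\log k)}\log^{O(1)}\rho$ arithmetic operations, giving the claimed time bound.

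Finally, one more oracle call at the returned minimizer yields its value and the attaining local vector $\vect y$.
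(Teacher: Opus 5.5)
Your proposal is correct and follows essentially the paper's proof: you check the box for emptiness, translate it into a ball $B_\rho$, add a steep convex $\ell_1$-distance penalty to force minimizers into the box, show that convex extensibility implies discrete convicity by the same argument, and apply \cref{thm:veselov} with \cref{lem:oracle} as the oracle. The only difference is the penalty weight: the paper uses $2T+1$, where $T$ bounds $|J|$ on $B_\rho$, whereas you use a unit-step lattice Lipschitz constant of $F_{\vect q}$. Your weight also works once each outside lattice point is compared with its coordinatewise clamping into the box, which is a discrete argument and does not need slopes of $\widehat F_{\vect q}$; you should also take $\rho\ge1$ so that $\log\rho$ stays meaningful.
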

\begin{proof}
  Compute the endpoints from \eqref{eq:phase-box} and report an empty box if some coordinate satisfies $\alpha_{q,j}>\beta_{q,j}$.
  This takes polynomial time in $|I|+\log N$.
  Thus, assume that the box is nonempty.
  Our goal is to apply~\Cref{thm:veselov}, which is stated for convic functions on a ball-domain.
  For completeness, we first show how to reduce the box-domain to a ball-domain and then show that the function is convic.

  Write $\vect\alpha=\vect\alpha_q$ and $\vect\beta=\vect\beta_q$, translate by $\vect v=\vect\alpha+\vect w$, and set
  \begin{equation*}
    \vect R=\vect\beta-\vect\alpha,\qquad
    \vect x_*=\vect q+M\vect\alpha,\qquad
    J(\vect w)=F(\vect x_*+M\vect w).
  \end{equation*}
  Every \(k\)-dimensional integer vector \(0 \leq \vect w \leq \vect R\) lies in the Euclidean ball $B_\rho$ centered at the origin, where
  \[
    \rho=\max\{1,k\norm{\vect R}_\infty\}.
  \]
  For every $\vect w\in B_\rho$, the corresponding global vector has infinity norm at most $X=\norm{\vect x_*}_\infty+M\rho$.
  Define the following upper bounds:
  \begin{equation*}
    \begin{split}
      Y & =\max\{\norm{\ellloc}_\infty,\norm{\uloc}_\infty\}, \\
      T & =\norm{\vect c^{(0)}}_1X+\norm{\cloc}_1Y
      +K\left(\norm{\vect b}_1
      +X\sum_{i,j}|Z_{ij}|+Y\sum_{i,j}|E_{ij}|\right).
    \end{split}
  \end{equation*}
  Every local vector in its box has infinity norm at most $Y$.
  The triangle inequality therefore bounds the absolute value of every candidate objective in \eqref{eq:value} by $T$, including the global cost.
  Hence $|J(\vect w)|\le T$ for every $\vect w\in B_\rho\cap\Z^k$.
  All these quantities have polynomial encoding length in $|I|+\log N$. In particular, $\log(2+\rho)$ is polynomially bounded in $|I|+\log N$.

  Define on the whole integer lattice
  \begin{equation*}
    \widetilde J(\vect w)=J(\vect w)
    +(2T+1)\sum_{j=1}^k\dist(w_j,[0,R_j]),
  \end{equation*}
  where $\dist(t,[0,R_j])=\max\{0,-t,t-R_j\}$ is a convex function.
  By \cref{cor:period}, $J$ is convex extensible, and adding this convex distance penalty preserves convex extensibility.
  At an integer point outside the box, the sum of distances is at least one.
  Thus every such point in $B_\rho$ has penalized value at least $-T+(2T+1)=T+1$, whereas every point of the box has penalized value $J(\vect w)\le T$.
  Consequently, every minimizer of $\widetilde J$ over $B_\rho\cap\Z^k$ lies in the box and minimizes $J$ there if at least one feasible point exists.

  To see that convex extensibility implies discrete convicity, let $H:\R^k\to\R$ be a convex extension of $h:\Z^k\to\R$.
  Take integer points $\vect x_i,\vect y,\vect z$ satisfying $\vect z=\vect y+\sum_i\lambda_i(\vect y-\vect x_i)$, where $\lambda_i\ge0$ and $h(\vect x_i)\le h(\vect y)$.
  Put $\Lambda=\sum_i\lambda_i$.
  Then
  \begin{align*}
     & \vect z=\vect y+\sum_i\lambda_i(\vect y-\vect x_i) =\vect y+\sum_i\lambda_i\vect y-\sum_i\lambda_i\vect x_i \\
     & \Leftrightarrow \vect z + \sum_i\lambda_i\vect x_i=\vect y+\sum_i\lambda_i\vect y                           \\
     & \Leftrightarrow \vect z + \sum_i\lambda_i\vect x_i=(1 + \Lambda) \vect y                                    \\
     & \Leftrightarrow \vect y=\frac{\vect z+\sum_i\lambda_i\vect x_i}{1+\Lambda}.
  \end{align*}
  Convexity of $H$, which agrees with $h$ at these integer points, gives
  \[
    (1+\Lambda)h(\vect y)
    \le h(\vect z)+\sum_i\lambda_i h(\vect x_i)
    \le h(\vect z)+\Lambda h(\vect y).
  \]
  Subtracting $\Lambda h(\vect y)$ yields $h(\vect y)\le h(\vect z)$,
  precisely the defining inequality for a discrete convic function.

  Thus, the algorithm by~\cite{VGZC20} (see~\cref{thm:veselov}) applies to $\widetilde J$ on $B_\rho\cap\Z^k$.
  It uses $2^{O(k^2\log(k+1))}\log(2+\rho)$ comparison queries and $2^{O(k^2\log(k+1))}\cdot (\log(2+\rho))^{O(1)}$ arithmetic operations.
  Each comparison query between two points evaluates $\widetilde J$, which requires evaluating $F$ once.
  By \cref{lem:oracle}, each evaluation of $F$ takes time $f(k,\Delta)\cdot |I|^{O(1)}$.
  Multiplying the number of queries by this evaluation cost and adding the internal operations of the algorithm yields a running time of
  \begin{equation*}
    2^{O(k^2\log(k+1))}\,
    f(k,\Delta) \cdot (|I|+\log N)^{O(1)}.
  \end{equation*}
  Let $\vect w^\star$ be the returned minimizer and set $\vect v=\vect\alpha_{\vect q}+\vect w^\star$.
  By the penalty argument, $\vect v$ minimizes \eqref{eq:min-global}.
  One final evaluation at $\vect x=\vect q+M\vect v$ returns the optimum value and an attaining local vector $\vect y$ by \cref{lem:oracle} within the same bound.
  Absorbing $2^{O(k^2\log(k+1))}f(k,\Delta)$ into a computable function $f'(k,\Delta)$ completes the proof.
\end{proof}

We now have all the ingredients to prove correctness and running time for
the finite-bound algorithm.

\begin{theorem}[Solving instances with finite bounds]\label{thm:bounded-time}
  Given finite integral lower and upper bounds, an instance $I$ of \eqref{eq:fourblock} can be solved in time $f(k,\Delta)\cdot {|I|}^{O(1)}$.
  The algorithm returns an optimal integer solution or reports infeasibility.
\end{theorem}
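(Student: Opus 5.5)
The algorithm first computes $N$ from \Cref{prop:idp} for the generalized $n$-fold matrix $Q=[E\ I]$, whose bricks have width at most $2k$ and entries bounded by $\Delta$. It then sets $M=kN$ and enumerates all $M^k$ residues $\vect q\in\{0,\ldots,M-1\}^k$. Because $N=2^{2^{(2k\Delta)^{O(k)}}}$ is a computable function of $k$ and $\Delta$ alone, both the computation of $N$ and the number of residues $M^k$ are bounded by some $f(k,\Delta)$. The term $\log N$ in \cref{lem:box} is likewise bounded by a function of $(k,\Delta)$, so $(|I|+\log N)^{O(1)}\le g(k,\Delta)\cdot|I|^{O(1)}$, using $(a+b)^c\le 2^c a^c b^c$ for $a,b\ge1$.

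For each residue $\vect q$, we call \cref{lem:box}. It either reports an empty box or returns a minimizer $\vect v_{\vect q}$ of $F_{\vect q}$ over $\vect\alpha_{\vect q}\le\vect v\le\vect\beta_{\vect q}$, together with its value and a local witness $\vect y_{\vect q}$. The residue classes partition $\Z^k$, and for each $\vect q$ the map $\vect v\mapsto \vect q+M\vect v$ is a bijection from $\{\vect v\in\Z^k:\vect\alpha_{\vect q}\le\vect v\le\vect\beta_{\vect q}\}$ onto $\{\vect x\equiv\vect q \pmod M:\vect\ell^{(0)}\le\vect x\le\vect u^{(0)}\}$; this follows from the rounding in \eqref{eq:phase-box}. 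Hence every box $\vect\ell^{(0)}\le\vect x\le\vect u^{(0)}$ is covered exactly. We take the residue with the smallest reported value and set $\vect x^\star=\vect q+M\vect v_{\vect q}$ with local vector $\vect y^\star=\vect y_{\vect q}$; by the bijection, $\vect x^\star$ minimizes $F$ over the global box. If every box is empty, the global box is empty, which cannot occur under the standing assumption $\vect\ell\le\vect u$. Otherwise the instance is infeasible anyway, and we report infeasibility.

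Correctness then comes from \cref{lem:exact}. If $Z\vect x^\star+E\vect y^\star=\vect b$, we output $(\vect x^\star,\vect y^\star)$, which is optimal; the local bounds hold automatically because the witness from \cref{lem:oracle} respects them. If the residual is nonzero, \cref{lem:exact} certifies that \eqref{eq:fourblock} is infeasible. One subtlety needs care: \cref{lem:exact} is stated for \emph{some} minimizer $\vect x^\star$ with an attaining $(\vect y^\star,\vect s^\star)$. Its proof, however, shows that whenever the instance is feasible, every minimizer of the joint penalized objective is feasible. So the particular minimizer we find suffices, and the outcome does not depend on tie-breaking.

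The total running time is $M^k$ calls to \cref{lem:box} plus polynomial bookkeeping, i.e.\ $M^k\cdot f'(k,\Delta)\,(|I|+\log N)^{O(1)}$, which we absorb into $f(k,\Delta)\cdot|I|^{O(1)}$. The main obstacle I expect is this absorption. First, the number of residues is doubly exponential in $(2k\Delta)^{O(k)}$, so we must check that this factor is purely parametric, i.e.\ that $N$ is computable from $k,\Delta$ without depending on $n$. Second, the $\log N$ inside the polynomial in \cref{lem:box} must be separated from $|I|$ as above. A minor point is that the encoding of the modified instance (the $n$-fold with slack bounds $\vect S(\vect x)$ and $\vect x$ up to magnitude $X$) stays polynomial in $|I|+\log N$, which \cref{lem:box} already handles.
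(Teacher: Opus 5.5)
Your proposal is correct and matches the paper's proof: enumerate the $M^k=(kN)^k$ residue classes, apply \cref{lem:box} to each nonempty residue box, keep the best pair, conclude with \cref{lem:exact}, and absorb $(kN)^k$ and the $\log N$ term into the parameter function. The only (harmless) difference is that you take $N$ for brick width $2k$, matching $Q=[E\ I]$ as used in \cref{lem:periodic-midpoint}, whereas the paper invokes \cref{prop:idp} with block dimension $3k$; either choice is valid because the proposition covers all blocks of size at most its parameter.
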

\begin{proof}
  Compute \(W, K\) from~\eqref{eq:penalty-weight}.
  Choose \(N\) from~\cref{prop:idp} for block dimensions at most \(3k\) and entry bound \(\Delta\), and set \(M=kN\).
  Every global integer vector \(\vect x \in \Z^k\) has a unique expression $\vect x=\vect q+M\vect v$ with $\vect q\in\{0,\ldots,M-1\}^k$.
  For a fixed $\vect q$, the vector lies in the global box exactly when $\vect \alpha_q\leq \vect v \leq \vect \beta_q$, where $\vect \alpha_q$ and $\vect \beta_q$ are defined in~\eqref{eq:phase-box}.
  Thus, the residue boxes partition the global search space, and at least one such box exists.

  For each residue vector $\vect q$, apply \cref{lem:box}.
  Discard an empty residue box; otherwise obtain a minimizer $\vect v$, its value $F_{\vect q}(\vect v)$, and an attaining local vector $\vect y$.
  Set $\vect x=\vect q+M\vect v$.
  Choose a pair $(\vect x^\star,\vect y^\star)$ with the smallest value among all residues \(\vect q\).
  This pair minimizes the penalized objective over the original product box.
  By \cref{lem:exact}, if $Z\vect x^\star+E\vect y^\star=\vect b$, it is an optimal solution of \eqref{eq:fourblock}; otherwise the instance is infeasible.
  This proves correctness.

  It remains to bound the running time. The algorithm inspects
  \begin{equation*}
    M^k=(kN)^k
  \end{equation*}
  residue classes.
  By \cref{lem:box}, all work for one residue class, including the emptiness check, value evaluations, and witness recovery, takes time \(f(k,\Delta) \cdot (|I|+\log N)^{O(1)}\).
  Multiplying the number of residue classes by this bound gives a total running time of
  \[
    (kN)^k f(k,\Delta)\cdot (|I|+\log N)^{O(1)}.
  \]
  Moreover, $N$ is computable from $(k,\Delta)$ alone.
  Absorbing this term, $(kN)^k$, and the dependence on $\log N$ into a computable parameter function gives the claimed running time.
\end{proof}

\subsection*{General Bounds and Unboundedness}
\label{sec:unbounded}

The following theorem removes the assumption of finite variable bounds
and completes the proof of \cref{thm:main}.

\mainthm*
\begin{proof}
  Normalize the instance \(I\) by shifting variables with finite lower bounds, reflecting variables with only a finite upper bound, and splitting free variables \(z_i\) into a difference of two nonnegative variables \(z_i^+-z_i^-\).
  Encode each remaining finite upper bound using a nonnegative slack, placing the slack and its equation in the same block as the original variable.
  The resulting instance has the form
  \begin{equation}\label{eq:normalized}
    \min\{\langle \widetilde{\vect c}, \vect z\rangle +\gamma:
    H\vect z=\vect b',\ \vect z\in\Z_{\ge0}^{d}\}.
  \end{equation}
  Each original variable gives at most two new variables, and each block gains at most $k$ rows.
  Thus the normalized instance has 4-block structure with block dimensions at most $2k$ and coefficient bound $\Delta$, since $\Delta\ge1$.
  Reversing the substitutions recovers an original solution with the
  same objective value.

  Let $m$ be the total number of constraints.
  If $m=0$, the normalized instance is feasible: a negative coefficient of $\widetilde{\vect c}$ makes the objective unbounded, and otherwise $\vect z=0$ is optimal.
  Hence assume $m\ge1$ and set
  \begin{equation*}
    \beta=\max\{1,\norm{H}_{\max},\norm{\vect b'}_\infty\},\qquad
    \Gamma=(2m\beta+1)^m,
  \end{equation*}
  where $\norm{H}_{\max}$ is the largest absolute entry of $H$.

  For any nonnegative integer solution $\vect z$ to~\eqref{eq:normalized}, the vector $(\vect z,1)$ belongs to the integer kernel of $[H\ {-\vect b'}]$.
  By the conformal decomposition property~\cite{Cook1986AnIA}, \((\vect z, 1)\) is a sum of Graver elements conformal to $(\vect z,1)$.
  Since \(\vect z\) is nonnegative, all summands are therefore nonnegative.
  Since their last coordinates are integers summing to~$1$, exactly one summand has last coordinate~$1$.
  Write it as $(\bar{\vect z},1)$.
  All other summands have the form $(\vect g^i,0)$, with $\vect g^i\ge0$ and $H\vect g^i=0$.
  Thus $\vect z=\bar{\vect z}+\sum_i\vect g^i$ and
  \begin{align}\label{eq:bounded-problem}
    H\bar{\vect z}=\vect b',\qquad
    0\le\bar{\vect z}\le\vect z,\qquad
    \norm{\bar{\vect z}}_\infty\le\Gamma.
  \end{align}
  The matrix $[H\ {-\vect b'}]$ has $m$ rows and entries of absolute value at most $\beta$.
  Applying~\cite[Lemma~2]{EHK18} to its Graver element $(\bar{\vect z},1)$ gives
  \[
    \norm{(\bar{\vect z},1)}_1
    \le (2m\beta+1)^m=\Gamma.
  \]
  In particular, $\norm{\bar{\vect z}}_\infty\le\Gamma$.
  Thus, \eqref{eq:bounded-problem} preserves feasibility.
  Now solve the bounded feasibility problem~\eqref{eq:bounded-problem} by applying \cref{thm:bounded-time} with zero objective.
  If it is infeasible, report infeasibility.
  Otherwise, let $\vect z^0$ be a feasible solution.

  By \cite[Corollary~7.1b]{Sch86}, the nonempty polyhedron
  \[
    P=\{\vect z\ge0:H\vect z=\vect b'\}
  \]
  is the sum of a polytope \(Q\) and its recession cone
  \[
    C=\{\vect g\ge0:H\vect g=0\}.
  \]
  Now test in polynomial time~\cite{Kha80}, if there exists some $\vect g\in C$ with $\langle\widetilde{\vect c},\vect g\rangle<0$.

  If such a rational direction \(\vect g \in C\) exists, multiply it by a common denominator to obtain an integer direction \(\vect g'\).
  The points $\vect z^0+t\vect g'$, $t\in\Z_{\ge0}$, then prove integer unboundedness of the system~\eqref{eq:normalized}.

  Otherwise, every nonnegative kernel vector has nonnegative cost.
  The decomposition $\vect z=\bar{\vect z}+\sum_i\vect g^i$ shows that every $\vect z$ that is a feasible solution to~\eqref{eq:normalized} has a representative $0 \leq \bar{\vect z} \leq \Gamma\mathds{1}$ with $\langle \widetilde{\vect c}, \bar{\vect z} \rangle \le\langle \widetilde{\vect c}, \vect z \rangle$.
  The finite, nonempty feasible set inside this box therefore contains a global integer optimum.
  Apply \cref{thm:bounded-time} again, now with objective $\langle \widetilde{\vect c}, \vect z \rangle$, and recover an optimal solution of the original instance.

  Finally, $\log\Gamma=m\log(2m\beta+1)$ is polynomial in $|I|$.
  The matrix $[H\ {-\vect b'}]$ is used only to prove this bound and is never passed to the finite-bound solver.
  Thus the algorithm uses at most two finite-bound calls, each with block dimensions at most $2k$, coefficient bound $\Delta$, and polynomial encoding length, together with one polynomial-time LP computation.
  Its total running time is $f(2k,\Delta)\cdot |I|^{O(1)}$, which proves the theorem.
\end{proof}

We believe that the positive \(n\)-fold results for separable convex functions~\cite{HKLV26,Lig26} can be transferred to this setting.
\section*{Acknowledgements}
Funded by the Deutsche Forschungsgemeinschaft (DFG, German Research Foundation) - Project number  528381760

\section*{Declaration of generative AI use}
During the preparation of this work, the authors used GPT-6 Astra and Gemini 3.1 Pro to assist in theoretical exploration and for discussions on the topic.
Our initial approach was to find a W[1]-hardness reduction.
While we were able to encode problems like subgraph isomorphism in 4-block ILPs, most reductions resulted in large coefficients in the block matrices.
This motivated us to look for FPT algorithms.
The final proof was found by GPT-6 Astra during a discussion with the authors.
The authors independently verified the mathematical proofs for correctness, and improved the presentation.
The authors take full responsibility for the entire content, technical claims, and integrity of the paper.

\printbibliography

\end{document}